\newtheorem{theorem}{Theorem}[section]
\newtheorem{lemma}[theorem]{Lemma}
\newtheorem{corollary}[theorem]{Corollary}
\newcounter{todo}%
\def\beep{{\kw beep} }
\begin{document}
  \title{Computing a Maximal Independent Set Using Beeps}
  \author{Alejandro Cornejo\footnote{\texttt{acornejo@mit.edu},
          Massachusetts Institute of Technology (MIT)}, Bernhard
      Haeupler\footnote{\texttt{haeupler@mit.edu}, Massachusetts
          Institute of Technology (MIT)}, and Fabian
      Kuhn\footnote{\texttt{fabian.kuhn@usi.ch}, University of Lugano
          (USI), Switzerland}}
  \date{}
  \maketitle

  \begin{abstract}
    We consider the problem of finding a maximal independent set (MIS) in
the discrete beeping model introduced in DISC 2010. At each time, a node
in the network can either beep (i.e., emit a signal) or be silent.
Silent nodes can only differentiate between no neighbor beeping, or at
least one neighbor beeping. This basic communication model relies only
on carrier-sensing.  Furthermore, we assume nothing about the underlying
communication graph and allow nodes to wake up (and crash) arbitrarily.

We show that if a polynomial upper bound on the size of the network $n$
is known, then with high probability every node becomes stable in
$O(\log^3 n)$ time after it is woken up. To contrast this, we establish
a polynomial lower bound when no a priori upper bound on the network
size is known. This holds even in the much stronger model of local
message broadcast with collision detection.

Finally, if we assume nodes have access to synchronized clocks or we
consider a somewhat restricted wake up, we can solve the MIS
problem in $O(\log^2 n)$ time without requiring an upper bound on the
size of the network, thereby achieving the same bit complexity as Luby's
MIS algorithm.

  \end{abstract}

  \vspace{8.5cm}
  
  \pagebreak
  \setcounter{page}{2}

  \section{Introduction}

This paper studies the problem of computing a \emph{maximal
    independent set} (MIS) in the discrete beeping wireless network
model of~\cite{beepcolor}.  A maximal independent set of a graph is a
subset $S$ of vertices, such that no two neighboring vertices belong
to $S$, and any vertex outside of $S$ has a neighbor inside
$S$. Computing an MIS of a network in a distributed way is a classical
problem that has been studied in various communication models. On the
one hand, the problem is fundamental as it prototypically models
symmetry breaking, a key task in many distributed computations. On the
other hand, the problem is of practical interest, as especially in
wireless networks, having an MIS provides a basic clustering that can
be used as a building block for, e.g., efficient broadcast, routing,
or scheduling.

The network is modelled as a graph and time progresses in discrete and
synchronized time slots. In each time slot, a node can either transmit
a ``jamming'' signal (i.e., a beep) or detect whether at least one
neighbor beeps. We believe that such a model is minimalistic enough to
be implementable in many real world scenarios. At the same time, the
model is simple enough to study and mathematically analyze distributed
algorithms. Further, it has been shown that such a minimal
communication model is strong enough to efficiently solve non-trivial
tasks \cite{beepcolor,infocom09,schneider10disc}.

The beeping model can be implemented using only carrier sensing where
nodes need only to differentiate between silence and the presence of
close-by activity on the wireless channel. Note that we do \emph{not}
assume that nodes can sense the carrier and send a beep
simultaneously, a node that is beeping is assumed to receive no
feedback. We believe that the model is also interesting from a
practical point of view since carrier sensing can typically be used to
communicate more energy efficiently and over larger distances than
sending regular messages.

Besides the basic communication properties describe above, we make
almost no additional assumptions. Nodes wakeup asynchronously
(controlled by an adversary), and sleeping nodes are \emph{not}
automaticaly woken up by incoming messages. Upon waking up, a node has
no knowledge about the communication network. In particular a node has
no a priori information about its neighbors or their state.  No
restrictions is placed on the structure of the underlying
communication graph (e.g., it need not be a unit disk graph or a
growth-bounded graph).

Our contributions are two-fold. First, we show that if nodes are not
endowed with any information about the underlying communication graph,
any (randomized) distributed algorithm to find an MIS requires at
least $\Omega(\sqrt{n/\log n})$ rounds. We remark that this lower
bound holds much more generally.  We prove the lower bound for the
significantly more powerful radio network model with arbitrary message
size and collision detection, and is therefore not an artifact of the
amount of information which can be communicated in each
round. Furthermore, this lower bound can be easily extended for other
problems which require symmetry breaking (such as e.g., a
coloring or a small dominating set).

Second, we study what upper bounds can be obtained by leveraging some
knowledge of the network. Aided only by a polynomial upper bound on
the size of the network, we present a simple, randomized distributed
algorithm that finds an MIS in $\bigO(\log^3 n)$ rounds with high
probability. We then show that the knowledge of an upper bound on $n$
can be replaced by synchronous clocks.
In this case, we describe how to find an MIS with
high probability in $\bigO(\log^2 n)$ rounds. Finally, we show that the
synchronous clocks assumption can be simulated if we allow the wake up
pattern tu be slightly restricted,
also achieving a running time of $\bigO(\log^2 n)$.  We highlight that
all the upper bounds presented in this paper compute an MIS eventually
and almost surely, and thus only their running time is randomized.
Moreover, in addition to being robust to nodes waking up, with no
changes the algorithms also support nodes leaving the network with
similar guarantees. 

\paragraph{Related Work:}
The computation of an MIS has been recognized and studied as a
fundamental distributed computing problem for a long time (e.g.,
\cite{alon86,awerbuch89,luby86,panconesi95}).  Perhaps the single most
influential MIS algorithm is the elegant randomized algorithm of
\cite{alon86,luby86}, generally known as Luby's algorithm, which has a
running time of $\bigO(\log n)$. This algorithm works in a standard
message passing model, where nodes can concurrently reliably send and
receive messages over all point-to-point links to their
neighbors. \cite{sirocco09} show how to improve the bit complexity of
Luby's algorithm to use only $O(\log n)$ bits per channel ($O(1)$ bits
per round).  For the case where the size of the largest independent
set in the neighborhood of each node is restricted to be a constant
(known as bounded independence or growth-bounded graphs),
\cite{podc08} presented an algorithm that computes an MIS in
$\bigO(\log^* n)$ rounds. This class of graphs includes unit disk
graphs and other geometric graphs that have been studied in the
context of wireless networks.

The first effort to design a distributed MIS algorithm for a wireless
communication model is by \cite{mass04}. They provided an algorithm
for the radio network model with a $\bigO(\log^9 n/\log\log n)$
running time. This was later improved~\cite{podc05} to $\bigO(\log^2
n)$. Both algorithms assume that the underlying graph is a unit disk
graph (the algorithms also work for somewhat more general classes of
geometric graphs). The two algorithms work in the standard radio
network model of \cite{bgi} in which nodes cannot distinguish between
silence and the collision of two or more messages. The use of carrier
sensing (a.k.a.\ collision detection) in wireless networks has e.g.\
been studied in \cite{chlebus00,ilcinkas10,schneider10disc}. As shown
in \cite{schneider10disc}, collision detection can be powerful and can
be used to improve the complexity of algorithms for various basic
problems. \cite{mobihoc08} show how to approximate a minimum
dominating set in a physical interference (SINR) model where in
addition to sending messages, nodes can perform carrier sensing. In
\cite{flurywattenhofer}, it is demonstrated how to use carrier sensing as an
elegant and efficient way for coordination in practice.

The present paper is not he first one that uses carrier sensing alone
for distributed wireless network algorithms. A similar model to the
beep model considered here was first studied in
\cite{degesys07,infocom09}. As used here, the model has been
introduced in \cite{beepcolor}, where it is shown how to efficiently
obtain a variant of graph coloring that can be used to schedule
non-overlapping message transmissions. Most related to this paper are
results from \cite{schneider10disc} and \cite{science11}. In
\cite{schneider10disc}, it is shown that by solely using carrier
sensing, an MIS can be computed in $O(\log n)$ time in growth-bounded graphs
(a.k.a.\ bounded independence graphs). Here, we drop that restriction
and study the MIS problem in the beeping model for general graphs. In
\cite{science11}, Afek et al.\ described an $O(\log^2n)$ algorithm for
a similar model motivated by a biological process in the
development of the nervous system of flies. In \cite{science11}, it
assumed that nodes can beep and listen to neighboring beeps at the
same time and that all nodes are woken up synchronously.


  \section{System Model and Preliminary Definitions}
\label{sec:model}

In this paper we adopt the discrete beeping model introduced in
\cite{beepcolor}.
To model the communication network we assume there is an underlying
undirected graph $G=(V,E)$, where $V$ is a set of $n=|V|$ vertices  and
$E$ is the set of edges 
We denote the set of
neighbors of node $u$ in $G$ by $N_G(u)=\set{v \mid \set{u,v} \in E}$.
For a node $u \in V$ we use $d_G(u)=|N_G(u)|$ to denote its degree
(number of neighbors) and we use $d_{\max}=\max_{u \in V} d_G(u)$ to
denote the maximum degree of $G$.

We consider a synchronous network model where an adversary can choose
when a node wakes up and when it crashes.
Specifically, each node in $G$ is occupied by a process and the system
progresses in synchronous 
rounds.
Initially all processes are sleeping, and a process starts participating
at the round when it is woken up, which is chosen by an adversary.
At any round the adversary can furthermore remove nodes by 
making them crash (or leave) permanently. 
We denote by $G_t \subseteq G$ the subgraph induced by the processes
which are participating at round $t$. Note that we 
described the model for an oblivious adversary that chooses a fixed 
$G$ without knowing the randomness used by the algorithm. 

Instead of communicating by exchanging messages, 
we consider a more primitive communication model that
relies entirely on carrier sensing. Specifically, in every
round a participating process can choose to either beep or listen.
In a round where a process decides to beep it receives no feedback.
If a process at node $v$ listens in round $t$ it can
only distinguish between silence (i.e., no process $u \in N_{G_t}(v)$ beeps
in round $t$) or the presence of one or more beeps (i.e., there exists a
process $u \in N_{G_t}(v)$ who beeps in round $t$).
Observe that a beep conveys less information than a conventional 1-bit
message, since in the latter its possible to distinguish between
no message, a message with a one, and a message with a zero.

Given an undirected graph $H$, a set of vertices $I \subseteq V(H)$ is
an independent set of $H$ if every edge $e \in E$ has at most one
endpoint in $I$. An independent set $I \subseteq V(H)$ is a maximal
independent set of $H$, if for all $v \in V(H)\setminus I$ the set
$I\cup\set{v}$ is not independent.
An event is said to occur with high probability, if it occurs with
probability at least $1-n^{-c}$ for any constant $c \ge 1$, where
$n=|V|$ is the number of nodes in the underlying communication graph. 
For a positive integer integer $k \in \mathbb{N}$ we use $[k]$ as short
hand notation for $\set{1,\ldots,k}$. In a slight abuse of this notation
we use $[0]$ to denote the empty set $\varnothing$ and for $a,b \in
\mathbb{N}$ and $a < b$ we use $[a,b]$ to denote the set
$\set{a,\ldots,b}$.

This paper describes several distributed algorithms that find a
maximal independent set in the beeping model.
In the algorithms described in this paper, nodes can be in one of three
possible states: \emph{inactive}, \emph{competing} and
\emph{MIS}. We say a node is \emph{stable} if it is in the
MIS and all its neighbors are inactive, or if it has a
stable neighbor in the MIS.
Observe that by definition, if all nodes are stable then every node is
either in the MIS or inactive, and the MIS nodes describe a maximal
independent set.
We will focus solely on algorithms in which eventually all nodes become
stable (i.e., with probability one), and once nodes become stable they
remain stable unless an MIS node crashes.
In other words, we only consider Las Vegas type algorithms which always
produce the correct output, but whose running time is a random variable.
Moreover, we will show that with high probability nodes become stable
quickly.

We say a (randomized) distributed algorithm solves the MIS problem in $T$
rounds, if in the case that no wake ups and crashes happen for $T$ rounds
all nodes become stable with high probability. We furthermore say an MIS
algorithm is \emph{fast-converging} if it also guarantees that any individual
node irrevocably decides to be inactive or in the MIS after
being awake for at most $T$ rounds. Note that this stronger termination guarantee
makes only sense if there are no crashes since a stable inactive node has to change
its status and join the MIS if all its MIS neighbors crash. Moreover,
this is precisely the guarantee that we provide in the algorithm presented in Section
4.


\section{Lower Bound for Uniform Algorithms}
\label{sec:lowerbound}

In this section we show that without some a priori information about the
network (e.g., an upper bound on its size or maximum
degree) any fast-converging (randomized) distributed algorithm
needs at least polynomial time to find an MIS with constant probability.
In some ways, this result is the analog of the polynomial lower
bound~\cite{isaac02} on the number of rounds required for a successful
transmission in the radio network model without collision detection or
knowledge of $n$.

We stress that this lower bound is not an artifact of the beep model,
but a limitation that stems from having message
transmission with collisions and the fact that nodes are required to decide (but not
necessarily terminate) without waiting until all nodes have woken up.
Although we prove the lower bound for the problem of finding an
MIS, this lower bound can be generalized to other problems (e.g., minimal
dominating set, coloring, etc.).

Specifically, we prove the lower bound for the stronger communication
model of the local message broadcast with collision detection.
In this communication model a process can choose in every round either
to listen or to broadcast a message (no restrictions are made on the
size of the message). When listening a process receives silence if
no message is broadcast by its neighbors, it receives a collision
if a message is broadcast by two or more neighbors, and it receives
a message if it is broadcast by exactly one of its neighbors.
The beep communication model can be easily simulated by this model
(instead of beeping send a $1$ bit message, and when listening translate a
collision or the reception of a message to hearing a beep) and hence the
lower bound applies to the beeping model.

At its core, our lower bound argument relies on the observation that 
a node can learn essentially no information about the graph $G$ if upon waking up,
it always hears collisions or silence. It thus has to decide whether it
remains silent or beeps within a \emph{constant} number of rounds. More
formally:

\begin{prop} \label{prop:beep}
  Let $A$ be an algorithm run by all nodes, and let $b \in
  \{\mathrm{silent},\mathrm{collision}\}^*$ be a fixed pattern.
  If after waking up a node $u$ hears $b(r)$ whenever it listens in
  round $r$, then there are two constants $\ell \geq 1$ and $p \in
  (0,1]$ that only depend on $A$ and $b$ such that either
  \begin{inparaenum}[\bf a)]
  \item $u$ remains listening indefinitely, or 
  \item $u$ listens for $\ell-1$ rounds and broadcasts in round $\ell$
  with probability $p$.
  \end{inparaenum}
\end{prop}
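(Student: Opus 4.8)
The plan is to track the node's behavior only up to its first broadcast and to observe that, as long as it keeps listening, it receives no information beyond the fixed pattern $b$; hence its entire view is deterministic and all randomness lives in its own decisions. I would model algorithm $A$ as drawing an internal random string $\omega$ once at wake-up (re-indexing rounds so that $u$ wakes at round $1$), so that the action $a_t(\omega) \in \{\mathrm{listen}, \mathrm{broadcast}\}$ in round $t$ is a deterministic function of $\omega$ together with the observations received in rounds $1,\ldots,t-1$. The crucial point is that on the event $L_{t-1} := \{a_1 = \cdots = a_{t-1} = \mathrm{listen}\}$, those observations equal exactly $b(1),\ldots,b(t-1)$ and are therefore fixed, so on $L_{t-1}$ the map $\omega \mapsto a_t(\omega)$ depends on $\omega$ alone.

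Next I would define, for each $t \geq 1$, the conditional branching probability $\pi_t := \Pr_\omega[a_t = \mathrm{broadcast} \mid L_{t-1}]$, with $L_0$ the whole space so that $\pi_1$ is unconditional. By the previous observation each $\pi_t$ depends only on $A$ and on $b(1),\ldots,b(t-1)$, hence only on $A$ and $b$. Two cases then arise. If $\pi_t = 0$ for every $t$, an easy induction via $\Pr[L_t] = \Pr[L_{t-1}](1-\pi_t)$ gives $\Pr[L_t] = 1$ for all $t$, so $u$ listens in every round almost surely, which is case a). Otherwise set $\ell := \min\{t : \pi_t > 0\}$ and $p := \pi_\ell$. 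Since $\pi_s = 0$ for all $s < \ell$, the same induction yields $\Pr[L_{\ell-1}] = 1$, i.e.\ $u$ deterministically listens through round $\ell-1$, and then $\Pr[a_\ell = \mathrm{broadcast}] = \Pr[a_\ell = \mathrm{broadcast} \mid L_{\ell-1}] = p$ because the conditioning event has probability one. This is exactly case b), with $\ell \geq 1$ and $p = \pi_\ell \in (0,1]$ by construction.

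The one place that needs care is justifying that a vanishing branching probability forces almost-sure listening, so that the eventual broadcast probability $p$ is a genuine unconditional quantity and not a conditional artifact. I would handle this through the recursion $\Pr[L_t] = \Pr[L_{t-1}](1-\pi_t)$, which collapses $\Pr[L_{\ell-1}]$ to $1$ precisely when $\pi_s = 0$ for all $s < \ell$; this simultaneously rules out any hidden dependence of $a_\ell$ on the earlier (now forced) coin outcomes. A secondary subtlety is that a broadcasting node receives no feedback, so the argument breaks down once $u$ broadcasts — but since the statement only concerns the behavior up to and including the first broadcast, this causes no difficulty.
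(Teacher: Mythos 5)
Your proposal is correct and follows essentially the same route as the paper's (much terser) proof: the key observation in both is that a node hearing the fixed pattern $b$ gains no information, so its action in each round depends only on $A$, $b$, and its internal randomness, and the case split is between never broadcasting and the first round $\ell$ with positive broadcast probability $p$. Your version simply makes rigorous what the paper leaves implicit, namely the conditioning on the all-listen event $L_{\ell-1}$ and the induction $\Pr[L_t]=\Pr[L_{t-1}](1-\pi_t)$ showing that this event has probability one, so that $p$ is a genuine unconditional constant.
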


\begin{proof}
  Fix a node $u$ and let $p(r)$ be the probability with which node $u$
  beeps in round $r$.  Observe that $p(r)$ can only depend on $r$, what
  node $u$ heard up to round $r$ (i.e., $b$) and its random choices.
  Therefore, given any algorithm, either $p(r)=0$ for all $r$ (and node
  $u$ remains silent forever), or $p(r) > 0$ for some $r$, in which case
  we let $p=p(r)$ and $\ell=r$.
\end{proof}

We now prove the main result of this section:

\begin{thm}
  \label{thm:lowerbound}
  If nodes have no a priori information about the graph $G$ then any
  fast-converging distributed algorithm in the local message broadcast
  model with collision detection that solves the MIS problem with
  constant probability requires requires at least $\Omega(\sqrt{n/\log
  n})$ rounds, even if no node crashes.
\end{thm}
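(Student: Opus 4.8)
The plan is to construct a hard instance where many nodes wake up simultaneously and, by Proposition~\ref{prop:beep}, are forced to make an irrevocable decision within a constant number of rounds---before any of them can gather information distinguishing their true position in the graph. The adversary will exploit this by building a graph in which the nodes' forced early decisions are ``wrong'' with constant probability.

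Let me sketch the construction. I would take a large set of $k$ nodes and wake them all up at the same round. Each such node, in isolation, sees only silence (it has no beeping neighbors yet), so by Proposition~\ref{prop:beep} each node either listens forever or, after a fixed number $\ell$ of silent rounds, beeps in round $\ell$ with a fixed probability $p$. Because the algorithm is \emph{fast-converging}, a node cannot listen forever and still decide; so within $O(1)$ rounds each node independently attempts to ``speak up'' (broadcast/beep) with some constant probability $p$, making a decision that depends only on its own coins and the constant silent prefix it has heard. The key point is that all these nodes behave \emph{identically and independently} up to round $\ell$, since none of them has yet received any signal from a neighbor that would break the symmetry. I would then connect these nodes into a graph---for instance via a suitable bipartite or clique-like gadget---so that whatever pattern of beeps/broadcasts the nodes produce in round $\ell$, a constant fraction of the decisions collide or are inconsistent with \emph{any} valid MIS on the eventual graph. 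By choosing the hidden structure of $G$ at random (or adversarially after fixing $\ell$ and $p$), I would force that with constant probability the committed decisions violate the MIS property, regardless of the algorithm.

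To get the $\Omega(\sqrt{n/\log n})$ bound, I expect the argument to proceed by a union/anti-concentration estimate over independent trials. With $k$ symmetric nodes each broadcasting with probability $p$ in round $\ell$, the probability that \emph{exactly one} broadcasts---the only outcome that reliably conveys information through a collision-detecting channel---is $kp(1-p)^{k-1}$, which is bounded away from $1$ unless $k$ is tiny. By embedding $\Theta(\sqrt{n/\log n})$ such independent gadgets, or by a layered construction that forces the process to resolve one symmetric cluster at a time, one shows that resolving all clusters correctly with constant probability requires $\Omega(\sqrt{n/\log n})$ rounds; the $\log n$ factor enters through the high-probability (union-bound) requirement that \emph{all} clusters be handled correctly. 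The adversary reveals graph structure only as forced, so the nodes' early decisions must be made ``blind.''

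The main obstacle will be making the reduction airtight against an \emph{adaptive} strategy that hedges: a clever algorithm might have nodes broadcast with carefully tuned probabilities and use the resulting collisions to bootstrap coordination over several rounds. I must show that even this buys only a constant amount of information per round, so $\Omega(\sqrt{n/\log n})$ rounds are genuinely needed. Concretely, the delicate step is arguing that the adversary can always maintain a large ``indistinguishable'' pool of nodes---nodes whose entire received history is a fixed string $b$ of silences and collisions---and that Proposition~\ref{prop:beep} then pins down their behavior to a single constant $(\ell,p)$ pair. I would formalize this via an indistinguishability argument: maintain an invariant that a shrinking-but-still-large set of nodes share an identical transcript, so each round can eliminate at most a constant-factor (or additive $\poly\log n$) worth of this ambiguity, yielding the polynomial lower bound on the number of rounds.
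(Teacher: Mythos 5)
Your central intuition---maintain a large pool of nodes whose entire received history is one fixed string of silences and collisions, so that Proposition~\ref{prop:beep} pins their behaviour down to constants $(\ell,p)$ and their independent private coins cannot reliably elect exactly one MIS node---is precisely the heart of the paper's proof (the cliques $U_j$ there play the role of your symmetric pool). However, the construction you sketch has a fatal flaw: you wake all the symmetric nodes \emph{simultaneously}. Simultaneous wake-up gives every node a common round counter, i.e.\ synchronized clocks, and Section~\ref{sec:synch} of the paper shows that with synchronized clocks the MIS problem is solvable in $O(\log^2 n)$ rounds with \emph{no} a priori knowledge of $n$. Indeed, once your $k$ nodes reach round $\ell$ and beep independently with probability $p$, their transcripts diverge (beeped, vs.\ heard a single message, vs.\ heard a collision), and a Luby-style algorithm exploits exactly this divergence to break symmetry in polylogarithmic time; no wiring of ``bipartite or clique-like gadgets'' can prevent it. A polynomial lower bound must therefore use adversarially \emph{staggered} wake-ups, which your proposal never invokes.

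The second gap is quantitative: your closing invariant, that ``each round can eliminate at most a constant-factor worth of ambiguity,'' would give at best $\Omega(\log n)$ rounds even if it held (and, by the observation above, it does not hold without an explicit mechanism enforcing it). The paper's accounting runs in the opposite direction: the symmetric cliques $U_j$ (of size $\Theta(\log n)$, hearing only collisions and hence---in the case where collision-hearers stay silent---never broadcasting themselves) remain perfectly symmetric only because the adversary \emph{pays} for every round of confusion. In round $i$ it wakes a fresh clique $C_i$, pre-partitioned into $k$ sub-cliques of size $\Theta(\log n/p)$, one wired to each $U_j$; Proposition~\ref{prop:beep} applied to these silence-hearing newcomers guarantees that w.h.p.\ at least two nodes per sub-clique broadcast in round $\ell+i$, producing a collision at every $U_j$. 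Each round of sustained confusion thus costs $\Theta(k\log n)$ fresh nodes, so a budget of $n$ nodes sustains only $k=\Theta(\sqrt{n/\log n})$ such rounds---this cost-per-round accounting is exactly where the bound comes from, and it is the idea missing from your proposal. Two smaller errors: fast convergence does \emph{not} force a node ever to broadcast (a node may decide while listening forever); that branch of Proposition~\ref{prop:beep} is eliminated by a clique-symmetry argument, not by the convergence requirement. And the proposition must also be applied to nodes that hear only \emph{collisions}, whose two possible behaviours (silent forever, or broadcasting after $m$ rounds) require two different constructions, as in the paper's CASE 1 and CASE 2.
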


\begin{proof}
Fix any algorithm $A$. Using the previous proposition we split the
analysis in three cases, and in all cases we show that with probability
$1-o(1)$  any algorithm runs for $o(\sqrt{n/\log n})$ rounds.
  
We first ask what happens with nodes running algorithm $A$ that hear
only silence after waking up.
Proposition \ref{prop:beep} implies that either nodes remain silent
forever, or there are constants $\ell$ and $p$ such that nodes broadcast
after $\ell$ rounds with probability $p$.
In the first case, suppose nodes are in a clique, and observe that no
node will ever broadcast anything. From this it follows that nodes cannot
learn anything about the underlying graph (or even tell if they are
alone). Thus, either no one joins the MIS, or all nodes join the MIS
with constant probability, in which case their success probability is
exponentially small in $n$.

Thus, for the rest of the argument we assume that nodes running $A$ that hear
only silence after waking up broadcast after $\ell$ rounds with
probability $p$.
Now we consider what happens with nodes running $A$ that hear
only collisions after waking up. Again, by
Proposition \ref{prop:beep} we know that either they remain silent
forever, or there are constants $m$ and $p'$ such that nodes 
broadcast after $m$ rounds with probability $p'$.
In the rest of the proof we describe a different execution for each of
these cases.


\paragraph{CASE 1: (a node that hears only collisions remains silent
forever)} (see \cref{fig:case1} in \cref{sec:figures})

    For some $k \gg \ell$ to be fixed later, we consider a set of $k-1$
    cliques $C_1,\ldots,C_{k-1}$ and a set of $k$ cliques
    $U_1,\ldots,U_{k}$, where each clique $C_i$ has $\Theta(k\log
    n/p)$ vertices, and each clique $U_j$ has $\Theta(\log n)$ vertices.
    We consider a partition of each clique $C_i$ into $k$ sub-cliques
    $C_i(1),\ldots,C_i(k)$ each with $\Theta(\log n/p)$ vertices.
    For simplicity, whenever we say two cliques are connected, 
    they are connected by a complete bipartite graph.

    Consider the execution where in round $i \in [k-1]$
    clique $C_i$ wakes up, and in round $\ell$ the
    cliques $U_1,\ldots,U_k$ wake up simultaneously.
    When clique $U_j$ wakes up, it is is connected to sub-clique $C_i(j)$
    for each $i < \ell$.
    Similarly, when clique $C_i$ wakes up, if $i \ge \ell$ then for $j
    \in [k]$ sub-clique $C_i(j)$ is connected to clique $U_j$.

    During the first $\ell-1$ rounds only the nodes in $C_1$ are
    participating, and hence every node in $C_1$ broadcasts in round
    $\ell+1$ with probability $p$. Thus w.h.p.\
    for all $j \in [k]$ at least two nodes in sub-clique $C_1(j)$
    broadcast in round $\ell$.  This guarantees that all nodes in 
    cliques $U_1,\ldots,U_k$ hear a collision during the first round
    they are awake, and hence they also listen for the second
    round.  In turn, this implies that the nodes in $C_2$ hear
    silence during the first $\ell-1$ rounds they participate, and again
    for $j \in [k]$ w.h.p.\ there  are at least two
    nodes in $C_2(j)$ that broadcast in round $\ell+2$.

    By a straightforward inductive argument we can show (omitted) that in general
    w.h.p.\ for each $i \in [k-1]$ and for every $j \in
    [k]$ at least two nodes in sub-clique $C_i(j)$ broadcast in
    round $\ell+i$.
    Therefore, also w.h.p., all nodes in cliques
    $U_1,\ldots,U_k$ hear collisions during the first $k-1$ rounds
    after waking up.

    Observe that at most one node in each $C_i$ can join the MIS (i.e.
    at most one of the sub-cliques of $C_i$ has a node in the MIS),
    which implies there exists at least one clique $U_j$ that is
    connected to only non-MIS sub-cliques. However, since the nodes in
    $U_j$ are connected in a clique, exactly one node of $U_j$ must
    decide to join the MIS, but all the nodes in $U_j$ have the same
    state during the first $k-1$ rounds. Therefore if nodes decide after
    participating for at most $k-1$ rounds, w.h.p.\ either no one in $U_j$ joins the MIS, or more than
    two nodes join the MIS.

    Finally since we have $n \in \Theta(k^2\log n+k\log n)$ nodes, we
    can let $k \in \Theta(\sqrt{n/\log n})$ and the theorem
    follows. 



\paragraph{CASE 2: (a node that hears only collisions remains silent forever)}
(see \cref{fig:case2} in \cref{sec:figures})

    For some $k \gg m$ to be fixed later let $q=\floor{\frac{k}{4}}$ and
    consider a set of $k$ cliques $U_1,\ldots, U_k$ and a set of $m-1$
    cliques $S_1,\ldots,S_{m-1}$, where each clique $U_i$ has
    $\Theta(\log n/p')$ vertices, and each clique $S_i$ has
    $\Theta(\log n/p)$ vertices.
    As before, we say two cliques are connected if they form a complete
    bipartite graph.

    Consider the execution where in round $i \in [m-1]$
    clique $S_i$ wakes up, and in round $\ell+j$ for $j \in [k]$
    clique $U_j$ wakes up.
    When clique $U_j$ wakes up, if $j > 1$ it is connected to every
    $U_i$ for $i \in \set{\max(1,j-q),\ldots,j-1}$ and if $j < m$ it is
    also connected to every clique $S_h$ for $h \in \set{m-j,\ldots,m}$.

    During the first $\ell-1$ rounds only the nodes in $S_1$ are
    participating, and hence every node in $S_1$ broadcasts in round
    $\ell+1$ with probability $p$, and thus w.h.p.\ at least two nodes in $S_1$ broadcast in round $\ell+1$.
    This guarantees the nodes in $U_1$ hear a collision upon waking up, and
    therefore they listen in round $\ell+2$.
    In turn this implies the nodes in $S_2$ hear silence during the
    first $\ell-1$ rounds they participate, and hence w.h.p.\ at least two nodes in $S_2$ broadcast in round $\ell+2$.

    By a straightforward inductive argument we can show (omitted) that in general
    for $i \in [m-1]$ the nodes in $S_i$ hear silence for the first
    $\ell-1$ rounds they participate, and w.h.p.\ at least
    two nodes in $S_i$ broadcast in round $\ell+i$.
    Moreover, for $j
    \in [k]$ the nodes in $U_j$ hear collisions for the first $m-1$ rounds
    they participate, and hence w.h.p.\ there are at least
    two nodes in $U_j$ who broadcast in round $\ell+m+j-1$. This implies
    that w.h.p.\  for $j \in [k-q]$ the nodes in $U_j$ hear
    collisions for the first $q$ rounds they participate.

    We argue that if nodes choose weather or not to join the MIS $q$
    rounds after participating, then they fail w.h.p.\ In particular
    consider the nodes in clique $U_j$ for $j \in
    \set{q,\ldots,k-2q}$. These nodes will collisions for the first
    $q$ rounds they participate, and they are connected to other nodes
    which also hear beeps for the first $q$ rounds they participate.
    Therefore, if nodes decide after participating for less or equal
    than $q$ rounds, w.h.p.\ either a node and all its neighbors won't
    be in the MIS, or two or more neighboring nodes join the MIS.

    Finally since we have $n \in \Theta(m\log n + k\log n)$ nodes, we can
    let $k \in \Theta(n/\log n)$ and hence $q \in \Theta(n/\log n)$ and
    the theorem follows.
\end{proof}


  \section{Maximal Independent Sets Using an Upper Bound on \boldmath$n$} \label{sec:alg1}

\def\plength{c \log N}

In this section we describe a simple and robust randomized
distributed algorithm that computes an MIS with high probability in a
polylogarithmic number of rounds.
Specifically, the algorithm only
requires an upper bound $N > n$ on the total number of nodes in the system, and
guarantees that with high probability, $\bigO(\log^2 N \log n)$ rounds
after joining, a node knows if it belongs to the MIS or if it is covered
by an MIS node.
Therefore, if the known upper bound is
polynomial in $n$ (i.e., $N \in \bigO(n^c)$  for a constant $c$), the
algorithm terminates with high probability in time $\bigO(\log^3 n)$.

\paragraph{Algorithm:}

If a node hears a beep while listening at any point during the
execution, it restarts the algorithm.
When a node wakes up (or it restarts), it stays in an inactive state
where it listens for $c\log^2 N$ consecutive rounds.
After this inactivity period, nodes start competing and group rounds
into $\log N$ phases of $\plength$ consecutive rounds. Due to the
asynchronous wake up and the restarts, in general phases of different
nodes will not be synchronized.
In each round of phase $i$ with probability $2^i/8 N$ a node beeps, and
otherwise it listens. Thus by phase $\log N$ a node beeps
with probability $\frac{1}{8}$ in every round.
After successfully going through the $\log N$ phases of the competition
(recall that when a beep is heard during any phase, the algorithm
restarts) a node assumes it has joined the MIS and into a loop where it
beeps in every round with probability $1/8$ forever (or until it hears a
beep).

\begin{algorithm}
  \caption{FastMIS algorithm}
  \begin{algorithmic}[1]
    \State {\bf for} $c \log^2 N$ rounds {\bf do} listen \Comment{Inactive}
    \For{$i \in \set{1,\ldots,\log N}$} \Comment{Competing}
      \For{$\plength$ rounds}
      \State with probability $2^i/8 N$ beep, otherwise listen
      \EndFor
    \EndFor
    \State {\bf forever} with probability $\half$ beep then listen,
    otherwise listen then beep
    \Comment{MIS}
  \end{algorithmic}
\end{algorithm}

In contrast to the polynomial lower bound from Section \ref{sec:lowerbound} we 
show that the above algorithm does not only solve the MIS problem in $O(\log^2 N \log n)$
time but is also fast-converging. 

\begin{theorem}\label{thm:simpleMIS}
    The FastMIS algorithm solves the MIS problem in $O(\log^2 N \log
    n)$ time, where $N$ is an upper bound for $n$ that is a priori
    known to the nodes. Under arbitrary wake ups and no crashes, the
    FastMIS algorithm is furthermore fast-converging.
\end{theorem}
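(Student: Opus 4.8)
The plan is to analyze the FastMIS algorithm by reasoning about a single window of $O(\log^2 N \log n)$ consecutive rounds in which no wake-ups or crashes occur, and show that with high probability every node becomes stable within this window, while simultaneously arguing that each node irrevocably decides its final status within $O(\log^2 N)$ rounds of waking up (the fast-converging guarantee). The key conceptual picture is that the competition phases implement a Luby-style process: a node that survives all $\log N$ phases without hearing a beep has effectively ``won'' the symmetry-breaking contest against its neighbors and joins the MIS, after which its perpetual beeping (with probability $1/8$ each round) silences and covers its neighbors, forcing them to restart and then fall inactive.

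First I would set up the probabilistic machinery for a single phase. In phase $i$, each competing node beeps independently with probability $2^i/8N$; I want to show that for a node $u$ whose neighborhood still contains many competitors, the probability that $u$ passes phase $i$ silently while some neighbor beeps is substantial, so that in expectation the number of competing nodes in any neighborhood drops geometrically as the phase index climbs. The natural quantity to track is, for each node, the total beeping ``weight'' $\sum_{v \in N(u)} 2^{i_v}/8N$ of its active neighbors (where $i_v$ is $v$'s current phase); I would argue that once this weight is large, $u$ almost surely hears a beep and restarts, and once it is small, $u$ progresses. The first main step is thus a per-phase lemma: conditioned on reaching phase $i$, a node either hears a beep (and restarts) or advances, and the probability that a fixed node advances through all $\log N$ phases without any neighbor ever succeeding in ``claiming'' the slot is bounded so that the expected number of MIS nodes produced per cluster is $\Theta(1)$, mirroring Luby. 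The $c\log N$ rounds per phase are precisely the amplification needed to drive the per-phase failure probability below $n^{-c}$ via a union bound over all phases and nodes.

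Second, I would handle correctness and stability. The invariant to maintain is: \emph{no two neighbors simultaneously complete all $\log N$ phases.} This follows because when a node enters its final high-probability-beeping loop it beeps with constant probability each round, so any neighbor still competing hears a beep within $O(\log n)$ rounds w.h.p.\ and restarts; hence two adjacent nodes cannot both slip into the MIS loop, giving independence. Maximality follows because a node that repeatedly restarts must, after the others around it have stabilized, eventually find a phase window during which no neighbor beeps and thus either joins the MIS itself or stays inactive under a stable MIS neighbor. I would argue that the inactive listening period of $c\log^2 N$ rounds guarantees a newly woken node does not prematurely disrupt an already-stable neighborhood: it will hear the perpetual beeping of any adjacent MIS node during this window and correctly fall inactive.

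The hard part will be the stability and fast-convergence argument rather than the per-phase geometric decay. Specifically, I expect the main obstacle to be bounding the total number of times a node restarts before stabilizing: each restart costs $\Theta(\log^2 N)$ rounds, so to get an overall $O(\log^2 N \log n)$ bound I must show that w.h.p.\ no node restarts more than $O(\log n)$ times. This requires a potential-function or layered argument showing that each completed ``epoch'' of the competition permanently removes a constant fraction of the contention in every neighborhood (again the Luby intuition), so that after $O(\log n)$ epochs all neighborhoods are resolved; the subtlety is that restarts are triggered asynchronously by neighbors whose phases are unsynchronized, so I must carefully argue the decay still holds despite the lack of global phase alignment. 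For the fast-converging claim I would additionally show that, absent crashes, once a node completes all $\log N$ phases its decision is irrevocable (no neighbor will ever force it out), which again reduces to the no-two-adjacent-winners invariant together with the perpetual beeping driving out late-arriving competitors.
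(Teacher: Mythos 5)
Your skeleton matches the paper's in outline (track the neighborhood beeping weight, charge $O(\log^2 N)$ rounds per restart, show $O(\log n)$ resolutions suffice, use the MIS-loop beeping for stability), but two of your load-bearing steps have genuine gaps. First, your independence invariant --- ``no two neighbors simultaneously complete all $\log N$ phases'' --- is justified only by the MIS-loop beeping knocking out \emph{still-competing} neighbors, which is circular in exactly the dangerous case: two neighbors whose competitions end at (nearly) the same round are both already in the MIS loop before either has had time to hear the other, so neither is still competing when your argument needs it to be. The paper closes this case with a direct calculation: for both to survive their overlapping final $c\log N$ rounds, in each round where each beeps with probability $1/8$ they must choose \emph{identical} actions (both beep or both listen), since otherwise the listener hears a beep and restarts; this has probability at most $(1-\tfrac{1}{8})^{c\log N} = N^{-\Omega(c)}$, and the residual event of a simultaneous double entry is cleaned up almost surely by the tie-breaking beeps inside the MIS loop. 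Some argument of this form is indispensable; without it your invariant is an assumption, not a lemma.

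Second, the engine of your termination bound --- ``each completed epoch \emph{permanently} removes a constant fraction of the contention in every neighborhood'' --- is not a property this algorithm has. A node knocked out by a beep is not removed: it restarts and rejoins the competition $c\log^2 N$ rounds later, so contention is permanently removed only when an MIS node is created next to it, and once that happens the node is entirely resolved; per-neighborhood fractional decay over $O(\log n)$ epochs is therefore both unavailable and unnecessary (your related per-phase claim, that the expected number of competitors in a neighborhood drops geometrically with the phase index, is likewise unjustified under asynchronous restarts). The paper's route, which you can graft onto your setup since you already track the neighborhood weight $E_v(t)=\sum_{u\in N(v)}b_u(t)$, is: (i) $E_v$ changes by at most a constant factor over any window of $c\log N$ rounds, because beep probabilities only double between consecutive phases; (ii) hence w.h.p.\ any node still competing at round $t$ has $E_v(t)<\tfrac12$, since a high potential now implies a constant potential throughout the preceding window, during which the node would have heard a beep and restarted; (iii) hence, \emph{conditioned} on a node beeping or hearing a beep, with probability at least $1/e$ some node of its closed neighborhood beeps \emph{alone}, and that lone beeper silences its competitors long enough to finish its phases, enter the MIS loop, and cover its neighbors. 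Since every node beeps or hears a beep within $O(\log^2 N)$ rounds, each such event succeeds with constant probability regardless of phase alignment, and $O(\log n)$ events suffice w.h.p.; this per-event constant success probability is the crux that your epoch-decay plan is missing.
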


This demonstrates that knowing a priori size information about the network, even as 
simple as its size, can drastically change the complexity of a
problem. The knowledge of $n$ alone provablg creates an exponential
for the
running time of fast-converging MIS algorithms. 

\paragraph{Proof Outline.} First, we leverage the fact that for two neighboring
nodes to go simultaneously into the MIS they have to choose the same
actions (beep or listen) during at least $c\log N$ rounds. This does
no happen w.h.p.\ and thus MIS nodes are independent w.h.p.
On the other hand, since nodes which are in the MIS keep trying to break
ties, an inactive node will never become active while it has a neighbor
in the MIS, and even in the low probability event that two neighboring
nodes do join the MIS, one of them will eventually and almost surely
leave the MIS. 
The more elaborate part of the proof is showing that w.h.p., any node becomes stable after $O(\log^2 N\log n)$
consecutive rounds without crashes.  This requires three technical
lemmas. First we show that if the sum of the beep probabilities of
a neighbor are greater than a large enough constant, then they have
been 
than a (smaller) constant for the $c\log N$ preceding rounds. This can
be used this to
show that with constant probability, when a node $u$ hears or produces
beep, no neighbor of the beeping node beeps at the same time and thus
$u$ becomes
stable. Finally, since a node hears a beep or produces a beep every
$O(\log^2 N)$ rounds, $\bigO(\log^2 N \log n)$ rounds suffice to
stabilize w.h.p. (Detailed proofs in Appendix \ref{app:proofsalg1}.)


  \section{Synchronized Clocks}\label{sec:synch}

For this section we assume that nodes have synchronized clocks, i.e.,
know the current round number $t$. As before, we allow arbitrary node
additions and deletions.

\paragraph{Algorithm:}
Nodes have three different internal states:
\emph{inactive}, \emph{competing}, and \emph{MIS}.  Each node has a
parameter $k$ that is monotone increasing during the execution of the
algorithm.  All nodes start in the inactive state with $k=6$.

Nodes communicate in beep-triples, and synchronize by starting a
triple only when $t \equiv 0 \pmod 3$. The first bit of the triple is
the Restart-Bit. A \beep is sent for the Restart-Bit if and only if $t
\equiv 0 \pmod k$. If a node hears a \beep on its Restart-Bit it
doubles its $k$ and if it is active it becomes inactive. The second
bit sent in the triple is the MIS-Bit. A \beep is sent for the MIS-Bit
if and only if a node is in the MIS state. If a node hears a \beep on
the MIS-bit it becomes inactive. The last bit send in a triple is the
Competing-Bit. If inactive, a node listens to this bit, otherwise it
sends a \beep with with probability 1/2. If a node hears a \beep on
the Competing-Bit it becomes inactive. Furthermore, if a node is in
the MIS-state and hears a \beep on the Competing-Bit it doubles its
$k$. Lastly, a node transitions from inactive to active between any
time $t$ and $t+1$ for $t\equiv 0\pmod k$. Similarly, if a node is active when $t
= 0 \mod k$ then it transitions to the MIS state. In the sequel, we
refer to this algorithm as Algorithm 2. The state transitions 
are also depicted in Figure \ref{fig:synch}.

\begin{figure}[ht]
 \centering
  \includegraphics[height=2.5cm]{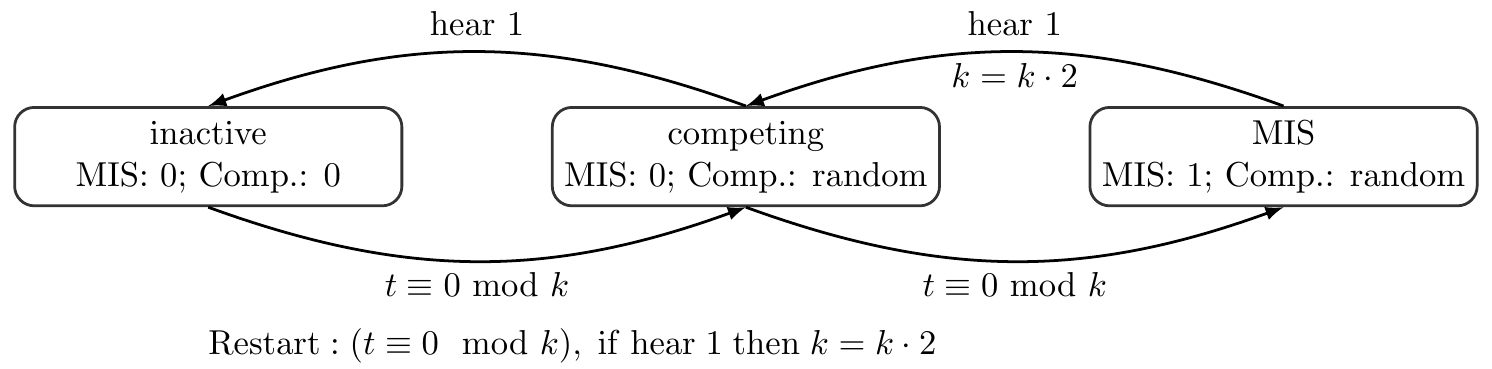}
  \caption{State Diagram for Algorithm 2}
  \label{fig:synch}
\end{figure}

\paragraph{Idea:}

The idea of the algorithm is to employ Luby's permutation algorithm in
which a node picks a random $O(\log n)$-size priority which it shares
with its neighbors. A node then joins the MIS if it has
the highest priority among its neighbors, and all neighbors of an MIS
node become inactive.
Despite the fact that this algorithm is described for
the message exchange model, it is straightforward to adapt the priority comparisons
to the \beep model. For this, a node sends its priority bit by
bit, starting with the highest-order bit and using a \beep for a $1$. The
only further modification is that a node stops sending its priority if
it has already heard a \beep on a higher order bit during which it
remained silent because it had a zero in the corresponding bit. Using this
simple procedure, a node can easily realize when a neighboring node has
a higher priority.
Furthermore, a node can observe that it has the highest-priority in its
neighborhood which is exactly the case if it does not hear any \beep. 

Therefore, as long as nodes have a synchronous start and know $n$ (or an
upper bound) it is straightforward to get Luby's algorithm working in the
beep model in $O(\log^2 n)$ rounds
(and ignoring edge additions and deletions). We
remark that this already implies a better round complexity that the
result of \cite{science11} in a strictly weaker model, albeit without
using a biologically inspired algorithm.

In the rest of this section we show how to remove the need for an upper
bound on $n$ and a synchronous start.  We solely rely on synchronized
clocks to synchronize among nodes when a round to transmits a new
priority starts. Our algorithm uses $k$ to compute an estimate for the
required priority-size $O(\log n)$. Whenever a collision occurs and two
nodes tie for the highest priority the algorithm concludes that $k$ is
not large enough yet and doubles its guess.
The algorithm furthermore uses the Restart-Bit to ensure that nodes
locally work with the same $k$ and run in a synchronized manner in which
priority comparisons start at the same time (namely every $t \equiv 0 \pmod
k$). It is not obvious that either a similar $k$ or a synchronized
priority comparison is necessary but it turns out that algorithms
without them can stall for a long time. In the first case this is
because repeatedly nodes with a too small $k$ enter the MIS state
simultaneously while in the second case many asynchronously competing
nodes (even with the same, large enough $k$) keep eliminating each other
without one becoming dominant and transitioning into the MIS state.

\paragraph{Analysis:}

To proof the algorithm's correctness, we first show two lemmas that show
that with high probability $k$ cannot be super-logarithmic.

\begin{lemma}\label{lem:smallk}
With high probability $k\in O(\log n)$ for all nodes during the execution of the algorithm.
\end{lemma}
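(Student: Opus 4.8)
Let me understand what Lemma \ref{lem:smallk} is claiming. The parameter $k$ starts at $6$ and doubles under two circumstances: (i) when a node hears a beep on its Restart-Bit (meaning some neighbor has $t \equiv 0 \pmod{k'}$ for a smaller $k'$), or (ii) when an MIS node hears a beep on the Competing-Bit (meaning a tie occurred—two nodes simultaneously had the highest priority). The claim is that w.h.p. $k$ never exceeds $O(\log n)$.

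The Restart-Bit doubling is really a *synchronization* mechanism: it propagates the maximum $k$ through the network. So if no node's $k$ ever grows too large due to *genuine ties* (mechanism ii), then the Restart-Bit can only spread existing values, not create new large ones. The heart of the lemma must therefore be: **ties are rare**.

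**The key probabilistic estimate.**

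The doubling via the Competing-Bit happens when two neighboring nodes tie for highest priority. Here $k$ controls the priority-size: each node picks a priority that is essentially $k$ random bits (one beep/silent choice per bit, with prob $1/2$ each based on the algorithm's Competing-Bit description... actually the priority comparison uses the bit-by-bit beep protocol over $O(k)$ bits). When $k = \Theta(\log n)$, two specific neighbors tie with probability $2^{-\Theta(k)} = n^{-\Theta(1)}$, i.e. polynomially small.

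So here is my plan. I'd fix a constant $\alpha$ and show that once $k$ reaches $\alpha \log n$ for some threshold constant, the probability that any further tie occurs—at this node or triggering a further doubling—is $n^{-c}$ for large $c$. The argument would be:

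I'd argue by a union bound over the (at most $O(\log n)$ many, since we're trying to rule out $k > \alpha \log n$) "doubling events" and over all $n$ nodes and all $\mathrm{poly}(n)$ relevant rounds. For $k \ge \alpha \log n$, the probability that two particular neighbors have identical priorities is at most $2^{-k} \le n^{-\alpha}$. Summing over the at most $\binom{n}{2} = O(n^2)$ pairs and over $\mathrm{poly}(n)$ time, choosing $\alpha$ large enough makes the total failure probability $n^{-c}$. The challenge is that a tie between *two* nodes isn't the full story: a node's priority comparison can involve all its neighbors simultaneously. But the event "this MIS-node hears a beep on its Competing-Bit" precisely means some neighbor also reached the MIS/competing-with-top-priority stage at the same triple, which requires that neighbor to have matched it through all $k$ priority bits—again a $2^{-k}$ event per neighbor, and a union bound over the $\le d_{\max} \le n$ neighbors only loses a factor of $n$.

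**Handling propagation and the induction on doubling steps.**

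The subtle part—what I expect to be the main obstacle—is the *interaction* between the two doubling mechanisms and the fact that $k$ is a moving target: a tie's probability $2^{-k}$ depends on the *current* $k$, which itself is a random variable determined by the execution history. I cannot simply say "the probability of a tie at $k = \alpha \log n$ is small" in isolation, because whether a node ever reaches $k = \alpha \log n$ is itself random and correlated with everything.

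To handle this cleanly, I would set it up as a "first bad event" argument. Define $K^* = \alpha \log n$ for a suitable constant $\alpha$. Consider the first round (across all nodes) at which some node's $k$ would exceed $K^*$. For $k$ to grow beyond $K^*$, at that moment the doubling must be triggered either by a genuine tie (mechanism ii) by a node whose current $k$ is already $\ge K^*/2 \ge \beta \log n$, or by a Restart-Bit beep from a neighbor who itself already has $k \ge K^*$ (mechanism i). The Restart-Bit case pushes the "blame" to a neighbor who already exceeded the threshold—so by considering the *first* node ever to cross $K^*$, that node cannot have crossed via a Restart-Bit from an already-crossed neighbor (there is none yet). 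Hence the first crossing must be via a genuine tie at $k \ge \beta \log n$, which I've shown has probability $\le n^{-\beta}$ per (node, neighbor, round) triple. A union bound over all $n$ nodes, $\le n$ neighbors each, and $\mathrm{poly}(n)$ rounds then bounds the probability that *any* node ever crosses $K^*$ by $n^{-c}$ for $\alpha$ (hence $\beta$) chosen large enough.

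I should also confirm the total number of rounds under consideration is polynomial: since we are analyzing the algorithm up to stabilization and the other lemmas (to be proven) will bound the running time, it suffices here to run the union bound over the $\mathrm{poly}(n)$ rounds in which the algorithm is active before stabilization—or, more carefully, to note that each doubling strictly increases $k$, there are at most $\log K^* = O(\log\log n)$ doublings per node before the threshold, so the number of "doubling decision points" that matter is itself polynomially bounded independent of total runtime. The main obstacle remains making the "first crossing" conditioning rigorous so that the per-event probability bound $2^{-k}$ is legitimately applied with $k \ge \beta\log n$ rather than with a random, possibly-small $k$; isolating the first crossing event is exactly what licenses that substitution.
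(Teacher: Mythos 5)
Your proposal is correct and takes essentially the same route as the paper's proof: the paper likewise isolates the \emph{first} node ever to acquire a too-large $k$, observes that a Restart-Bit doubling is triggered only by a beep from a neighbor with a \emph{larger} $k$ (so the first crossing cannot come from that mechanism), bounds a Competing-Bit doubling by the probability $2^{-k}$ that two neighbors compete for $k$ rounds with neither losing, and closes with a union bound over all nodes and the polynomially many rounds before stabilization. The one discrepancy is internal to your write-up: your opening gloss of mechanism (i) (a Restart beep signals a neighbor with \emph{smaller} $k'$) is the opposite of the property that your key step and the paper's proof both rely on (a Restart-Bit doubling is caused only by a neighbor with larger $k$, so the doubled value never exceeds that neighbor's $k$); under your literal gloss the first-crossing argument would not go through, so the argument should be stated with the latter reading, which is the intended semantics of the Restart-Bit.
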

\begin{proof}
We start by showing that two neighboring nodes $u,v$ in the MIS state
must have the same $k$ and transitioned to the MIS state at the same
time. We prove both statements by contradiction. 

For the first part assume
that nodes $u$ and $v$ are in the MIS state but $u$ transitioned to this
state (the last time) before $v$. In this case $v$ would have received
the MIS-bit from $u$ and become inactive instead of joining the MIS, a
contradiction.

Similarly, for sake of contradiction, we assume that $k_u < k_v$. In
this case, during the active phase of $u$ before it transitioned to the
MIS at time $t$ it would have set its Restart-bit to 0 at time $t - k_u$
and received a 1 from $v$ and become inactive, contradicting the
assumption that $k_u<k_v$.

Given this we now show that for a specific node $u$ it is unlikely to
become the first node with a too large $k$. For this we note that
$k_u$ gets doubled because of a Restart-Bit only if a \beep from a
node with a larger $k$ is received. This node can therefore not be
responsible for $u$ becoming the first node getting a too large
$k$. The second way $k$ can increase is if a node transitions out of
the MIS state because it receives a Competing-Bit from a neighbor
$v$. In this case, we know that $u$ competed against at least one such
neighbor for $k$ rounds with none of them loosing.  The probability of
this to happen is $2^{-k}$. Hence, if $k \in \Theta(\log n)$, this
does not happen w.h.p. A union bound over all nodes and the polynomial
number of rounds in which nodes are not yet stable finishes the proof.
\end{proof}

\begin{theorem}\label{thm:synchalg}
    If during an execution the $O(\log n)$ neighborhood of a node $u$
    has not changed for $\Omega(\log^2 n)$ rounds then $u$ is stable,
    i.e., $u$ is either in the MIS state with all its neighbors being
    inactive or it has at least one neighbor in the MIS state whose
    neighbors are all inactive.
\end{theorem}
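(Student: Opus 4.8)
The plan is to show that Algorithm 2 is a faithful, self-stabilizing realization of Luby's permutation (lottery) algorithm, and then to import Luby's analysis locally around $u$. Throughout I would condition on the high-probability event of Lemma~\ref{lem:smallk}, so that every node in the relevant neighborhood has $k \in O(\log n)$; consequently each competition window between two consecutive times $t \equiv 0 \pmod k$ spans only $O(\log n)$ rounds, and I will call such a window a \emph{phase}.

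First I would establish local synchronization. The Restart-Bit lets a node detect that a neighbor is running with a strictly larger $k$ --- it hears a beep on a bit on which its own $k$ told it to expect silence --- and in response it doubles $k$ (this is exactly the ``only larger $k$'s are heard'' phenomenon used in the proof of Lemma~\ref{lem:smallk}). Since $k$ only increases and is $O(\log n)$, each node performs at most $O(\log\log n)$ such corrections, each resolved within one period of $O(\log n)$ rounds, and because the clock is global, two neighbors with equal $k$ automatically share the phase boundaries $t \equiv 0 \pmod k$. Propagating the largest $k$ across the stable neighborhood therefore costs at most $O(\log^2 n)$ rounds, after which the competing nodes throughout $u$'s neighborhood run with a common $k$ and perfectly aligned phases, all within the budget.

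The heart of the argument is that one aligned phase implements exactly one round of Luby's algorithm. During a phase each competing node produces, via its Competing-Bit coin flips, a uniformly random $\Theta(\log n)$-bit priority; reading the bits in order, a node is set inactive precisely at the first bit on which it is silent while an active neighbor beeps, i.e.\ the first bit on which a neighbor's priority exceeds its own. Hence the nodes that survive an entire phase are exactly the local priority-maxima, and at the next boundary they enter the MIS state. Two adjacent nodes survive the same phase only if their priorities agree on every bit, an event of probability $n^{-\Omega(1)}$, so a union bound over the neighborhood makes the new MIS nodes independent w.h.p. Finally, once a node has an MIS neighbor it hears that neighbor's MIS-Bit the next time it reactivates and immediately returns to inactive, so MIS membership and domination are permanent (and the rare adjacent-MIS tie is broken almost surely by the Competing-Bit doubling). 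With this per-phase guarantee I would then invoke Luby's potential analysis: a constant fraction of the edges incident to still-undecided nodes disappears in expectation each phase, so after $O(\log n)$ phases $u$ is decided w.h.p.; since the outcome of $O(\log n)$ phases at $u$ depends only on its $O(\log n)$-hop neighborhood, the stability hypothesis is exactly what licenses this local analysis, and the total cost is $O(\log n)$ phases times $O(\log n)$ rounds, i.e.\ $O(\log^2 n)$ rounds.

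The main obstacle will be the interaction between the asynchronous, self-stabilizing dynamics and the clean Luby picture, rather than the per-phase combinatorics. I must argue that the transient period in which neighbors hold mismatched $k$ --- or in which a node reactivates with a too-small $k$ and prematurely enters the MIS --- neither corrupts the per-phase edge-removal guarantee nor persists long enough to blow the round budget, so that the $O(\log n)$ productive, aligned phases genuinely occur (possibly overlapping harmlessly with synchronization) inside the $\Omega(\log^2 n)$ window. Reconciling these local adversarial wake-up effects with the global constant-fraction progress bound is the delicate part.
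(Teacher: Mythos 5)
Your proposal is correct and follows essentially the same route as the paper's own proof: condition on Lemma~\ref{lem:smallk} so that $k \in O(\log n)$, propagate and align $k$ across an $O(\log n)$ neighborhood of $u$, observe that with a common $k$ and aligned phases the Competing-Bit comparisons implement Luby's permutation algorithm locally, and sum over the $O(\log\log n)$ doublings of $k$ to stay within $O(\log^2 n)$ rounds. The ``delicate part'' you flag at the end is handled in the paper not by separating synchronization from the Luby execution, but by a per-$k$ dichotomy --- within $O(k\log n)$ rounds a node either completes $O(\log n)$ clean priority exchanges or increases its $k$ (collisions from premature MIS entry also force a $k$-increase w.h.p.\ within $O(\log n)$ rounds) --- so transient periods are charged to $k$-increases and the geometric sum $\sum_i O(2^i \log n) = O(\log^2 n)$ closes the argument without needing a separate bound on how transients interact with the edge-removal potential.
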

\begin{proof}
First observe that if the whole graph has the same value of $k$ and no
two neighboring nodes transition to the MIS state at the same time, then
our algorithm behaves exactly as Luby's original permutation algorithm,
and therefore terminates after $O(k \log n)$ rounds with high
probability.
From a standard locality argument, it follows that a node $u$ also
becomes stable if the above assumptions only hold for a $O(k \log n)$
neighborhood around $u$. Moreover, since Luby's algorithm performs
only $O(\log n)$ rounds in the message passing model, we can improve our
locality argument to show that in if a $O(\log n)$ neighborhood around
$u$ is well-behaved, then $u$ behaves as in Luby's algorithm.

Since the values for $k$ are monotone increasing and propagate between
two neighboring nodes $u,v$ with different $k$ (i.e., $k_u > k_v$) in at
most $2k_u$ steps, it follows that for a node $u$
it takes at most $O(k_u \log n)$ rounds until either $k_u$
increases or all nodes $v$ in the $O(\log n)$ neighborhood of $u$ have
$k_v = k_u$ for at least $O(k \log n)$ rounds. We can furthermore assume
that these $O(k \log n)$ rounds are collision free (i.e, no two
neighboring nodes go into the MIS), since any collision leads with high
probability within $O(\log n)$ rounds to an increased $k$ value for one
of the nodes. 

For any value of $k$, within $O(k \log n)$ rounds a node thus either
performs Luby's algorithm for $O(\log n)$ priority exchanges, or it
increases its $k$. Since $k$ increases in powers of two and, according
to Lemma \ref{lem:smallk}, with high probability it does not
exceed $O(\log n)$, after at most $\sum_i^{O(\log \log n)} 2^i\cdot
3\cdot O(k \log n) = O(\log^2 n)$ rounds the status labeling around a
$O(\log n)$ neighborhood of $u$ is a proper MIS. This means that $u$ is
stable at some point and it is not hard to verify that the function of the
MIS-bit guarantees that this property is preserved for the rest of the
execution. 
\end{proof}

\section{Simple Wake Up}

In this section we show how to replace the assumption of
synchronized clocks by instead restricting the way in which
wake ups and crashes occur. The main theorem in this Section is Theorem
\ref{thm:independentsimplewakeup}.

We work with the following simple wake up restriction: The adversary is allowed to
start with any (possibly disconnected) graph, without loss of generality
we call this time $t=0$. Furthermore the adversary can at any time
wake up any set of new nodes, with the restriction that each new node is
connected at least to one \emph{old} node,
i.e., a node that has been around for at least
for $\delta$ rounds, where we think of $\delta$ as being a small 
non-constant quantity (e.g., $\log d_{\max}$). Similarly, the adversary
can crash any node, as long as this node is connected only to old nodes.

Given these 
quite flexible simple wake up dynamics, we will show that nodes
can simulate synchronous clocks. This reduction
allows us to execute the Algorithm 2 without synchronized clocks.
We start by presenting a very simple reduction, that requires $\delta$
to depend on the current round. 
We then refine the reduction
and show how to circumvent this problem and give an MIS
algorithm for the simple wake up assumption with $\delta$ that is at
least $\log d_{\max}$ (note that this does not imply that nodes need to
know $\log d_{\max}$).

\subsection{Simple Wake up and Synchronized Clocks}

The core idea is for each node to keep a local time counter, and use a
structured beep pattern to communicate this local time counter to new
nodes.
The simple wake up dynamics prevent the adversary from blocking these
beep pattern through staggered node additions, as those which were
described in the lower bound proof of Section \ref{sec:lowerbound}. 

We split messages into blocks.
A block starts with two zeros that unambiguously mark the beginning of a
block. This is followed by the current block $t$ (i.e. the time counter)
and lastly an equal amount of bits carrying the data of the simulated
algorithm. Both the individual bits describing the time and all data
bits are interleaved by ones which makes the block-beginning 
identifiable. As an example the bit sequence $abcdefghijkl$ would be
sent as
$00.0.a.00.1.b.00.1.0.c.d.00.1.1.e.f.00.1.0.0.g.h.i.00.1.0.1.j.k.l$,
where we replaced the separating ones by a period for better
readability. Observe that each block contains a header, the current
block (i.e., time), and some data.

The complete algorithm operates as follows. Once a node is awake, it
listens for four rounds. If no \beep was received during this time the
node can be sure that it is alone, which also implies that $t=0$. If a
node hears at least one \beep it waits until it hears two rounds of
silence in a row, which mark the beginning of a block. It then listens
for the length of the whole block which allows it to identify the
current block number $t$. In either case a node learns the current
block number (and thus the time) after listening for at most two
blocks. Then it is able to perform the same computations as the
synchronized algorithm of the previous section.

\begin{theorem}\label{thm:simluation}
Any algorithm that works in the discrete beep model with synchronized clocks in time $O(T)$ can be simulated by an algorithm that works for the simple wake up dynamics with $\delta = O(\log t)$ in time $O(T + \log t)$ where $t$ is the total time the algorithm is run. 
\end{theorem}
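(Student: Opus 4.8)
The plan is to establish the simulation correctness and timing by showing that the block-encoding scheme lets every node reliably recover the global time counter $t$ within at most two blocks, and that the simple wake-up restriction prevents the adversary from corrupting the beep pattern. The encoding itself is specified in the construction: the key structural invariant is that the pattern \texttt{00} occurs \emph{only} at the start of a block, because every time-bit and data-bit is interleaved by a separating \texttt{1}. This means two consecutive silent rounds are an unambiguous synchronization marker. The first thing I would verify carefully is this unambiguity claim, i.e. that within the interior of a block no two zeros can ever appear consecutively; this follows from the interleaving, but one must check it holds across the boundary between the time-field and the data-field as well.

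Next I would argue correctness of the simulation from a single node's perspective. A newly woken node listens for four rounds. If it hears only silence, the restriction guarantees it must in fact be alone (in the initial graph, or isolated), so it correctly concludes $t=0$ and begins emitting its own block pattern; here I would invoke the simple wake-up assumption to rule out the bad case where a node is actually adjacent to a participating node but happens to hear silence during a data portion. The crucial use of the restriction is that each new node is connected to at least one \emph{old} node, one that has been participating for at least $\delta$ rounds and is therefore already transmitting a full, correctly-phased block pattern. Thus a non-isolated new node is guaranteed to hear beeps, and by waiting for the \texttt{00} marker it locks onto a block boundary, reads off the interleaved time bits, and recovers $t$. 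Since a block has length $O(\log t)$ (the time counter has $O(\log t)$ bits, doubled for the data, plus headers and separators), and a node may wake up partway through one block and must then read one full subsequent block, synchronization completes within two block-lengths, i.e. $O(\log t)$ rounds. This is exactly where the requirement $\delta = O(\log t)$ enters: the neighbor must have been awake long enough to have already synchronized and be emitting valid blocks, so $\delta$ must exceed the synchronization time.

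I would then handle the consistency across multiple simultaneously-transmitting neighbors. Because all old nodes share the same global clock and emit blocks phased identically (blocks start at aligned times), overlapping beep patterns from several synchronized neighbors are consistent: a silent round is silent for all of them, and a beep round has at least one beeper, which is exactly what the beep model reports. Hence the listening node sees a well-defined merged pattern that still carries the correct time and still contains the \texttt{00} marker only at true block boundaries. The main obstacle, and the step I expect to require the most care, is precisely this interaction between asynchronous wake-up and the marker's unambiguity: I must rule out that a node waking mid-block could misread a spurious \texttt{00} arising from the superposition of a newly-woken silent node and its transmitting neighbors, or from a node that has not yet synchronized. The simple wake-up restriction is the lever that closes this gap, since it forbids the staggered-addition adversary of Section~\ref{sec:lowerbound} from manufacturing fake silence, but making the argument airtight for all overlap scenarios is the delicate part.

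Finally, with synchronized clocks faithfully simulated at an additive cost of $O(\log t)$ rounds for initial synchronization and a constant-factor overhead per round (each simulated round is carried by one data-bit inside a block of length $O(\log t)$, so the simulation slows time by an $O(\log t)$ factor), I would assemble the bound. An algorithm running in $O(T)$ synchronized rounds is simulated in $O(T + \log t)$ rounds after accounting for the one-time synchronization latency, which yields the stated running time and completes the reduction. I would then remark that combining this with Theorem~\ref{thm:synchalg} gives the promised $O(\log^2 n)$ MIS algorithm under simple wake-up without synchronized clocks or knowledge of $n$.
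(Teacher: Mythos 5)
Your overall approach matches the paper's: the block encoding with the \texttt{00} marker, the observation that a newly woken node locks onto a block boundary and recovers $t$ within $O(\log t)$ rounds, and the induction (via the simple wake-up restriction with $\delta = O(\log t)$) that all old nodes stay synchronized and emit identically-phased blocks. Your extra care about the unambiguity of the \texttt{00} marker under superposition of several synchronized transmitters is a reasonable addition that the paper glosses over. However, there is a genuine error in your final accounting, and it is exactly the step that determines whether the theorem's bound holds. You write that you have ``a constant-factor overhead per round'' but then explain it as ``each simulated round is carried by one data-bit inside a block of length $O(\log t)$, so the simulation slows time by an $O(\log t)$ factor.'' These two claims contradict each other, and the second one, taken at face value, yields a total running time of $O(T\log t + \log t)$, not the claimed $O(T + \log t)$.

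The resolution is in the construction itself, and it is the point your write-up misses: each block does \emph{not} carry a single data bit. By design, a block consists of the header, the $\Theta(\log t)$ time bits, and \emph{an equal number} of data bits (this is visible in the paper's example, where block number $100$ carries the three data bits $g,h,i$). Consequently a constant fraction of all transmitted bits are data bits, so delivering $T$ simulated rounds costs only $O(T)$ real rounds in an amortized sense; adding the one-time $O(\log t)$ synchronization latency gives $O(T + \log t)$. Without this constant-fraction packing argument, the reduction cannot achieve the stated bound, and the downstream application (combining with Theorem~\ref{thm:synchalg} to get an $O(\log^2 n + \log t)$ MIS algorithm) would degrade accordingly. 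You should replace your per-round accounting with this amortized, constant-density argument.
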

\begin{proof}
    If a new node is connected to a node that broadcasts the time for
    $O(\log t)$ rounds, it gets to know the number of blocks that have
    been sent (and thus time itself). In the simple wake up dynamics
    with $\delta = O(\log t)$ we thus get inductively that all old
    nodes are in synch and know the current time. With this knowledge
    they can easily infer how many data bits were sent around since
    the beginning of the algorithm and thus get a logical time on the
    data bits that is shared with all synchronized nodes. With this
    synchronization, old nodes can then run the simulated
    algorithm. Since during this computation by construction a
    constant fraction $c$ of the bits are data bits, the number of
    rounds the simulated algorithm needs to run after the $O(\log t)$
    time synchronization is at most $cT$.  This leads to the claimed
    total running time of $O(T + \log t)$.
\end{proof}

We can use the above reduction together with Algorithm 2 and obtain
an MIS algorithm with running time $O(\log^2 n + \log t)$. Unfortunately
if we run this algorithm for a long time we do not get any running time
guarantee in the number of nodes. To avoid this we make the observation
that Algorithm 2 does not use the full power of synchronized clocks but
solely requires that nodes can evaluate whether $t \equiv 0 \pmod k$ where
it suffices to have a $k = O(\log d_{\max})$ that is logarithmic in the
maximum degree of $G$. Thus if nodes know an a priori upper bound
$\Delta$ on $d_{\max}$ it suffices to track time modulo $\log \Delta$,
which requires only $\log \log \Delta$ time-bits. This way, at the cost
of having to know $\Delta$, the algorithm is not dependent on time any
more. We thus get the following corollary:

\begin{corollary}
    There is an algorithm that solves the MIS problem in a network
    with simple wake up dynamics with $\delta = O(\log t)$ in
    $O(\log^2 n + \log t)$ time, where $t$ is the time over which wake
    ups and crashes occur. If nodes are given an a priori upper bound
    $\Delta$ on the maximum degree then there is also a $O(\log^2 n)$
    time MIS algorithm that works in any network with simple wake up
    dynamics with $\delta = O(\log \log \Delta)$.
\end{corollary}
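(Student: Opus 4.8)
The plan is to obtain both parts of the corollary by composing the simulation of Theorem \ref{thm:simluation} with the synchronized-clock MIS algorithm (Algorithm 2) whose correctness and $O(\log^2 n)$ running time are established in Theorem \ref{thm:synchalg}. For the first statement this composition is immediate: taking $T = O(\log^2 n)$ in Theorem \ref{thm:simluation} produces an algorithm for the simple wake up dynamics with $\delta = O(\log t)$ and total running time $O(T + \log t) = O(\log^2 n + \log t)$. The only points I would verify are bookkeeping: that the simple wake up restriction on the simulating beep protocol feeds, through the reduction, exactly the arbitrary wake ups and crashes that Theorem \ref{thm:synchalg} already tolerates, and that a node present for $\delta = O(\log t)$ rounds has read a full block and synchronized its logical clock before it starts participating in the simulated Algorithm 2.

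The second statement removes the dependence on $t$ by exploiting that Algorithm 2 never reads the absolute time: its only timing decisions are triple alignment ($t \equiv 0 \pmod 3$) and the restart/transition condition ($t \equiv 0 \pmod k$). Hence it suffices to broadcast and track time modulo a single bounded modulus $M$ that is a common multiple of $3$ and of every value $k$ can take. The key input is the observation—strengthening Lemma \ref{lem:smallk} from the global $O(\log n)$ bound to a purely local one—that $k = O(\log d_{\max})$ already suffices, since $k$ doubles only when a node fails to break symmetry against a \emph{neighbor}, an event of probability $2^{-k}$ governed by the local degree. Given the a priori bound $\Delta \geq d_{\max}$, I would therefore cap $k$ at $k_{\max} = \Theta(\log \Delta)$; because every admissible $k$ is the starting constant times a power of two, $M = k_{\max} = O(\log\Delta)$ is already divisible by $3$ and by all such $k$, so knowing $t \bmod M$ determines both timing conditions.

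With this change the time field of each block encodes only $t \bmod M$, which fits in $\log M = O(\log\log\Delta)$ bits; every block then has fixed length $O(\log\log\Delta)$, so a freshly woken node aligns its logical clock after reading two blocks, i.e.\ after being present for $\delta = O(\log\log\Delta)$ rounds. Since a constant fraction of the transmitted bits are still data bits and the block length no longer grows with $t$, simulating the $O(\log^2 n)$ rounds of Algorithm 2 costs $O(\log^2 n)$ real rounds plus a one-time $O(\log\log\Delta) = o(\log^2 n)$ synchronization per node, giving the claimed $O(\log^2 n)$ running time.

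I expect the main obstacle to be justifying the bounded-modulus simulation rigorously. Two things must be pinned down: first, that capping $k$ at $k_{\max} = \Theta(\log\Delta)$ is harmless, i.e.\ that with high probability (in $n$) no node ever demands a larger $k$, which requires establishing the local $O(\log d_{\max})$ bound and carrying out the corresponding union bound; and second, that a node which recovers only $t \bmod M$, rather than the true round number, nevertheless reconstructs enough state to run indistinguishably from the synchronized-clock execution, so that the capped algorithm still mimics Luby's permutation algorithm on the relevant $O(\log n)$ neighborhoods exactly as in the proof of Theorem \ref{thm:synchalg}.
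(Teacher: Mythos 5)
Your proposal is correct and follows essentially the same route as the paper: the first statement is obtained by composing the simulation of Theorem~\ref{thm:simluation} with Algorithm~2 (Theorem~\ref{thm:synchalg}), and the second by observing that Algorithm~2 only ever evaluates $t \equiv 0 \pmod k$ with $k = O(\log d_{\max})$, so given $\Delta \ge d_{\max}$ it suffices to track time modulo $O(\log \Delta)$, costing only $O(\log\log\Delta)$ time-bits per block. The additional verification points you flag (the local bound on $k$, and that a mod-$M$ clock is indistinguishable from a true clock for this algorithm) are left implicit in the paper as well, so your treatment is if anything more careful than the original.
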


\subsection{The Simple Wake Up Algorithm}

In the last subsection we gave two algorithms that work in the simple
wake up model. Both algorithms have a drawback. The first one
deteriorates over time and thus the number of rounds it requires to
solve the MIS problem increases as time progresses.  On the other
hand, the second algorithm requires an a priori upper bound on
$d_{\max}$ or $n$. As we showed in Section \ref{sec:lowerbound} and
\ref{sec:alg1} this can be a drastic advantage for an algorithm. In
what follows we show that the synchronization required by Algorithm 2
can be achieved without a priori knowledge or dependence on $t$.

The algorithm builds on the synchronization ideas developed before where
nodes try to keep a time counter up to some precision. We know that it
suffices for the nodes to know the least significant $\log \log
d_{\max}$ bits. The problem is that now they do not know $d_{\max}$. The
following approach makes sure that nodes send out all time bits but
prioritize the earlier bits such that any node that listens for $O(2^l)$
rounds is able to infer the first $l$ bits of the time.

We will count the number of blocks that have been sent around since the
system started.  This allows us to start with $k=1$ and get all powers
of two for the values of $k$. It is important that nodes increase their
$k$.

Our approach is closely related to the binary carry sequence $B$, where the
$n$\ith digit is the number of zeros at the end of $n$ when written in
base 2.

$$B  =
0102010301020104010201030102010501020103010201040102010301020106010201\ldots$$
Suppose now we associate the $n^{th}$ number in this sequence with the
$n^{th}$ block that is sent. In this case if the binary carry sequence
number associated with a certain block is $l$, then for exactly all
nodes with $k = 2^i$ and $i \leq l$ we have $t \equiv 0 \mod k$ which
implies a status change and a zero in Restart-Bit for these nodes. We
are going to define another sequence $B'$ that allows to identify the
position of all numbers smaller than $l$ as long as any interval of
length $2^{l+2}$ of $B'$ is observed. The sequence $B'$ is a bit
sequence in which the $n^{th}$ bit is the parity of the number of
occurrences of the $n^{th}$ number of $B$ in the first $n$ numbers of
$B$, i.e.,
$$B' = 1101100111001001110110001100100111011001110010001101100011001001110110\ldots$$
Suppose that we split the sequence into two parts, one containing
all even $n$ and one with all odd $n$ (and thus corresponding to the
positions of zeros in $B$, and not zeros in $B$). Then the odd sequence is strictly alternating
while the even sequence is $1010$-free. The last observation is a result
of the fact that either the two ones or the two zeros correspond
to two consecutive occurrences of a $1$ in $B$ and should thus have
different parities. 
After having received (at most) eleven consecutive bits from sequence $B'$ a node will
therefore receive a $1010$ pattern on the odd subsequence and no such
pattern on the even subsequence. This allows it to identify the
positions of all zeros in $B$.
With this knowledge, a node can then turn its attention to the even
subsequence to identify the positions of the ones in $B$.
Again, this can be done in the same way,
namely splitting the even subsequence into two subsequences and waiting for a
$1010$ pattern. Iterating this procedure enables a node that receives $11 \cdot
2^l$ consecutive bits of the $B'$ sequence, to identify all positions of numbers
of at most $l$ in $B$. From then on, it can also send the right bits of
this sequence to its neighbors as soon as it learns them.

With this trick in mind we now describe the algorithm. It operates in
bit-quadruples in a similar manner to Algorithm 2 but with an
additional time-bit that is used to transmit the sequence $B'$ with the
time parity information. Furthermore use the block structure with the
leading zeros and the alternating ones to distinguish the beginning of a block.
In total a block with time-bit $T$, Restart-bit $R$, MIS-bit $M$ and
competing-bit $C$ will be transmitted as $00.T.R.M.C.$ where again the
$.$ represent the separating ones. The computation of the algorithm is
now as in Algorithm 2 with two minor modifications. First, the time bit
is used to convey the parity bit of the current time as described
above. A node constantly listens to bits in the sequence $B'$ that are
not known to it, learning more positions over time. On the position of
the bits it knows it sends these bits out. The second modification is
that a node never increases its $k$ value beyond the last power of two
for which it can evaluate $t \equiv 0 \pmod k$. This completes the
algorithm description. 

Given the arguments above we can prove our main theorem for this
section:

\begin{theorem}\label{thm:independentsimplewakeup}
There is a algorithm that solves the MIS problem in a network with
simple wake up dynamics with $\delta = O(\log d_{\max})$ in $O(\log^2
n)$ time.
\end{theorem}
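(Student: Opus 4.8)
The plan is to show that the Simple Wake Up Algorithm is nothing more than Algorithm~2 running on top of a self-synchronizing time layer, so that once the layer is proved correct the running-time bound is inherited from the synchronized analysis. Recall from the discussion preceding Theorem~\ref{thm:synchalg} that Algorithm~2 never uses the absolute time $t$; it only ever needs to test the predicate $t \equiv 0 \pmod{k}$ for its current $k$, and by Lemma~\ref{lem:smallk} together with that analysis it suffices to let $k$ range over powers of two up to $O(\log d_{\max})$. I would therefore prove the theorem in three stages: (i) establish that a node which has listened long enough can correctly evaluate $t \equiv 0 \pmod{2^i}$ for every relevant $i$ purely from the broadcast parity sequence $B'$; (ii) argue that under simple wake up every participating node reaches this state, with the capping rule preventing it from ever acting on an un-evaluable $k$; and (iii) conclude that the resulting execution coincides, up to a constant-factor slowdown from the block encoding, with a synchronized run of Algorithm~2, so that Theorem~\ref{thm:synchalg} yields stabilization in $O(\log^2 n)$ rounds.

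The combinatorial core is stage (i), namely the claim that observing $11\cdot 2^l$ consecutive bits of $B'$ suffices to locate every position whose $B$-value is at most $l$ (and hence to decide $t\equiv 0\pmod{2^i}$ for all $i\le l$). I would prove this by induction on $l$. A node first uses the leading $00$ of each block, which cannot occur internally because all non-header bits are interleaved with separator ones, to recover block boundaries and thereby isolate the stream of time bits as a window of $B'$. For the base case, splitting $B'$ by the parity of its index separates the positions of zeros in $B$ from the non-zeros; by construction of $B'$ as the running parity of occurrences, the odd subsequence is strictly alternating while the even one is $1010$-free, so within at most eleven bits the node sees a $1010$ pattern on exactly one of the two classes and pins down all zeros of $B$. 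The inductive step repeats this dichotomy on the restricted subsequence of positions not yet classified (those with $B$-value at least $l$): the same parity-of-occurrence property makes one half strictly alternating and the other $1010$-free, locating the positions of value exactly $l$ after a further doubling of the observation window, which accounts for the $11\cdot 2^l$ bound.

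For stage (ii) I would mirror the inductive synchronization argument of Theorem~\ref{thm:simluation}. An \emph{old} node, awake for $\delta = O(\log d_{\max})$ rounds, has by stage~(i) decoded $B'$ to precision $L = O(\log\log d_{\max})$, enough to evaluate $t\equiv 0\pmod{2^i}$ for every $i$ with $2^i = O(\log d_{\max})$; since it then emits the correct parity bit at each block, its output is a faithful window of $B'$. Because every newly woken node is connected to at least one old node, it receives such a window and, by stage~(i), synchronizes within $O(2^L)=O(\log d_{\max})=\delta$ rounds; inducting on the wake-up order shows all participating nodes agree on the relevant residues of time. The capping modification, never raising $k$ past the last power of two a node can currently evaluate, guarantees that no node ever branches on an un-synchronized predicate. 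With synchronization established, the execution is exactly a synchronized run of Algorithm~2 slowed by the constant number of control and separator bits per block, so the bounds of Lemma~\ref{lem:smallk} and Theorem~\ref{thm:synchalg} apply verbatim, giving stabilization after $O(\log^2 n)$ rounds once a node's $O(\log n)$ neighborhood is unchanged for $\Omega(\log^2 n)$ rounds; here $\log d_{\max} = O(\log n)$ absorbs the synchronization overhead into the same asymptotic bound.

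I expect stage~(i) to be the main obstacle. The definition of $B'$ as the running parity of block-value occurrences is engineered precisely so that each recursive odd/even split produces one strictly alternating and one $1010$-free class, but verifying this rigorously requires showing, at every level of the recursion, both that the diagnostic $1010$ pattern is \emph{guaranteed} to appear inside the bounded window and that it can never appear spuriously on the wrong class, and that the self-synchronization (a node not knowing its absolute offset into $B'$) nonetheless resolves the phase correctly from the block markers. Carefully tracking how the window length doubles per level while the absence of a spurious pattern is maintained is where the delicate counting lies; the remaining stages are then routine given the synchronized analysis already in hand.
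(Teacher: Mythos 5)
Your proposal takes essentially the same route as the paper's proof: an induction over the simple wake-up order showing that every old node decodes $B'$ to precision $O(\log\log d_{\max})$ and can therefore evaluate $t \equiv 0 \pmod{k}$ for every $k = O(\log d_{\max})$, followed by a reduction to the synchronized analysis (Lemma~\ref{lem:smallk} and Theorem~\ref{thm:synchalg}), with the $O(\log d_{\max}) = O(\log n)$ synchronization overhead absorbed into the $O(\log^2 n)$ bound. The paper's proof singles out as its one key point that a capped node $v$ with $k_v < k_u$ still converges to $k_u$ within $O(k_u)$ rounds because it learns the $k_u$-parity of time in $O(k_u)$ rounds; your stages (i)--(ii) contain exactly this fact (decoding level $l$ takes $O(2^l)$ rounds and the cap is lifted as bits are learned), so your argument covers it.
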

\begin{proof}
We first observe by induction that if $\delta \in O(\log d_{\max})$ we get
that any old node knows about the $\log \log d_{\max}$ lowest
significant bits of time. This allows $k$ to grow as large as $O(\log
d_{\max})$ which is sufficient for Luby's algorithm to work
efficiently. Thus simulating running Algorithm 2 will be successful in
$O(\log^2 n)$ blocks (each of size $O(1)$). The only thing that we need
to show is
that two neighbors $u,v$ with different values of
$k_u > k_v$ still converge to the larger value $k_u$ in time $O(k_u)$
(if none of them crashes) even though $v$ might not be allowed to increase
its $k$ because it does not know the time well enough. This is true
because $v$ will learn the $k_u$ parity of the time in $O(k_u)$ rounds
and then increase its $k$ value accordingly. Besides this no further
modifications were made to the algorithm and its correctness and running
time therefore follow from Theorem \ref{thm:simluation} and
\ref{thm:synchalg}.
\end{proof}


  \pagebreak



  \newpage
  \noindent{\huge \bf Appendices}
  \bigskip
  \appendix
  
  \section{Figures}
\label{sec:figures}

\begin{figure}[h]
  \centering
  \includegraphics[width=\textwidth]{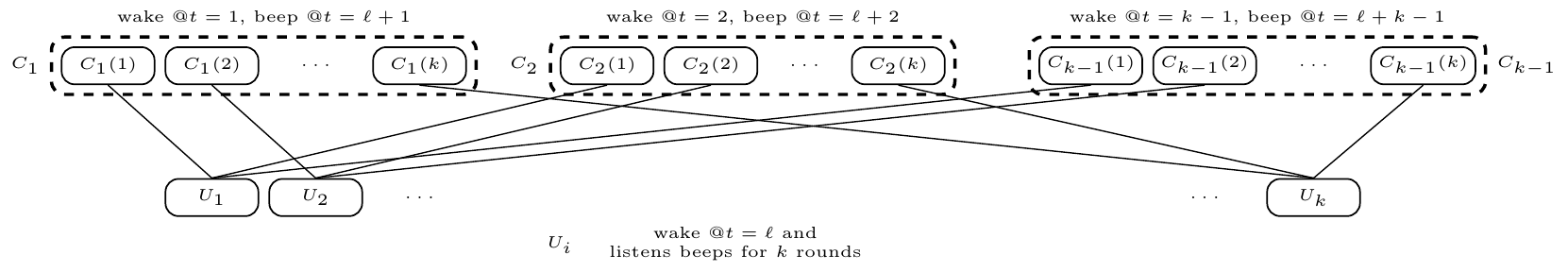}
  \caption{Execution for Case 1 of the Lower Bound}
  \label{fig:case1}
\end{figure}

\begin{figure}[h]
  \centering
  \includegraphics[width=0.6\textwidth]{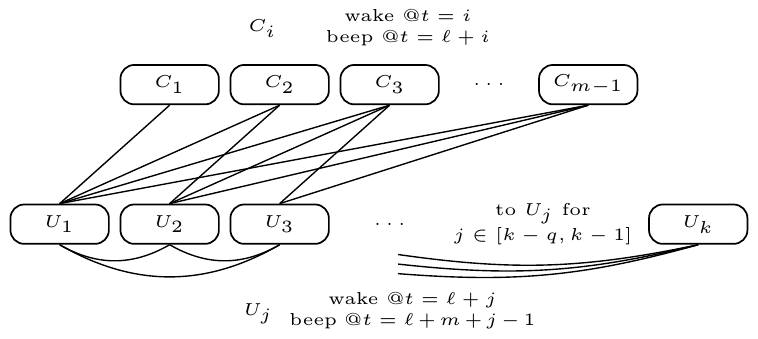}
  \caption{Execution for Case 2 of the Lower Bound}
  \label{fig:case2}
\end{figure}


  \section{Proofs for Section \ref{sec:alg1}}\label{app:proofsalg1}

First, we show that with high probability two neighboring nodes will
never join the MIS. Moreover, even in the low probability event that
they two neighboring nodes join the MIS, almost surely one of them
eventually becomes inactive.

\begin{claim}
  With high probability two neighboring nodes do not join the MIS. If
  two neighboring nodes are in the MIS, almost surely eventually one of
  them becomes inactive.
\end{claim}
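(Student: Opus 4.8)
The plan is to treat the two assertions separately, both resting on the restart rule that any node which hears a beep while listening immediately restarts.

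For the first assertion, I would show that for two neighbors $u,v$ to be in the MIS at a common round, there must be a window of $\plength = c\log N$ consecutive rounds during which neither restarts, and that this is polynomially unlikely. Rename the nodes so that $v$ is the one that enters the MIS state no earlier than $u$, at round $t_v$, and let $W$ be the $\plength$ rounds forming $v$'s final competing phase (phase $\log N$), during which $v$ beeps in each round independently with probability exactly $2^{\log N}/(8N)=1/8$. Since $v$ completes this phase without restarting and $u$ is in the MIS at a round $\ge t_v$ without having restarted since before $t_v$, neither node restarts during $W$; hence in every round of $W$ the two nodes must make the \emph{same} choice (both beep or both listen), for otherwise the one that listens would hear the other and restart. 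The key computation is that for a single round the probability of a matching choice is $p_u\cdot\frac18+(1-p_u)\cdot\frac78 = \frac78-\frac34 p_u\le \frac78$, where $p_u$ is $u$'s (arbitrary) beep probability that round. By independence across the $\plength$ rounds, the probability that $u$ and $v$ match throughout $W$ is at most $(7/8)^{c\log N}=N^{-\Omega(c)}$, and a union bound over the $O(n^2)$ neighbor pairs and the $\poly(n)$ rounds of interest yields the high-probability claim.

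For the second assertion, I would exploit the self-correcting structure of the MIS loop, in which each two-round iteration beeps in exactly one of its two rounds, chosen uniformly. Suppose $u$ and $v$ are neighbors that both remain in the MIS forever; I argue this event has probability zero. Consider $u$'s disjoint two-round iterations and let $\rho_k$ be the round in which $u$ beeps during its $k$-th iteration. Passing to every other iteration forces the $\rho_k$ to be at pairwise distance $\ge 3$, hence to lie in distinct iterations of $v$, so the events ``$v$ listens in round $\rho_k$'' are independent, each of probability $1/2$; whenever such an event occurs, $v$ hears $u$'s beep and restarts, becoming inactive. The probability that none of the first $m$ of these events occurs is $(1/2)^m\to 0$, so almost surely one of $u,v$ eventually becomes inactive, contradicting the assumption that both stay in the MIS forever.

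The main obstacle is the asynchrony introduced by adversarial wake-ups and restarts, which desynchronizes the phases and iterations of $u$ and $v$. For the first part this would have been fatal had $u$'s beep probability dropped near zero during $W$, making ``both listen'' almost certain; the argument survives precisely because the later node $v$ is necessarily in its final phase and beeps with the fixed probability $1/8$, keeping the per-round match probability $\frac78-\frac34 p_u$ bounded away from $1$ regardless of $u$'s state. For the second part the same desynchronization means $u$'s and $v$'s iteration boundaries need not align, which I handle by passing to a subsequence of $u$'s iterations to recover genuine independence of the per-round listening events.
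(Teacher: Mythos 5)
Your proof is correct and follows essentially the same route as the paper: both parts hinge on (i) the observation that two neighbors can only end up in the MIS together if their beep/listen choices coincide for $c\log N$ consecutive rounds, an event of probability at most $(7/8)^{c\log N}$, and (ii) the fact that once both are in the MIS loop, each iteration gives a constant probability that one hears the other beep while listening and restarts, so both surviving forever has probability zero. Your write-up is in fact somewhat more careful than the paper's sketch---by anchoring the window to the later node's final competing phase and letting the other node's per-round beep probability be arbitrary, you handle the asynchrony that the paper glosses over when it assumes both nodes beep with probability $1/8$ throughout the window.
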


\begin{proof}
  For two neighboring nodes to join the MIS they would first have to go
  through an interval of $c\log N$ consecutive rounds where at every
  round they both beep with probability $1/8$ and listen otherwise. 
  Moreover, during these $c\log N$ rounds it should not be the case
  that one of them listens while the other beeps, and hence they have to
  choose the same action (beep or listen) at each of these rounds.

  The probability of this happening is less than
  $(1-\frac{1}{8})^{c\log N} \le e^{-c\log N/8}$, and thus for
  sufficiently large $c$ (i.e $c \ge 8$) we have that with high
  probability two neighboring nodes do not join the MIS simultaneously.

  Moreover, assume two neighboring nodes are in the MIS simultaneously.
  Then at every round, one of them will leave the MIS with constant
  probability. Hence, the probability that they both remain in the MIS
  after $k$ rounds is exponentially small in $k$.  Hence, it follows
  that eventually almost surely one of them becomes inactive.
\end{proof}

Moreover, it is also easy to see that once a node becomes stable it
stays stable indefinitely (or until a neighbor crashes). This follows
by construction since stable MIS nodes will beep at least every 3
rounds, and therefore inactive neighbors will never start competing to
be in the MIS.
Hence to prove the correctness of the algorithm we need only
to show that eventually all nodes are either in the MIS or have a
neighbor in the MIS.

For a fixed node $u$ and a round $t$, we use $b_u(t)$ to denote the
\emph{beep probability} of node $u$ at round $t$.
The \emph{beep potential} of a set of nodes
$S \subseteq V$ at round $t$ is defined as the sum of the beep probabilities of
nodes in $S$ at round $t$, and denoted by 
$E_S(t)=\sum_{u \in S} b_u(t)$.
Of particular interest is the beep potential of the neighborhood of a
node, we will use $E_v(t)$ as short hand notation of
$E_{N(v)}(t)$.

The next lemma shows that if the beep potential of a particular set of
nodes is larger than a (sufficiently large) constant at round $t$, then
it was also larger than a constant during the interval $[t-\plength,t]$.
Informally, this is true because the beep probability of every node
increases slowly.

\begin{lem}
  \label{lem:slowchange}
  Fix a set $S \subseteq V$. If $E_S(t) \ge \lambda$ at round $t$,
  then $E_S(t') \ge \half\lambda-\frac{1}{8}$ at round $t' \in
  [t-\plength,t]$.
\end{lem}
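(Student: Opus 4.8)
The plan is to reduce the statement about the set potential $E_S$ to a per-node estimate on how fast a single beep probability $b_u$ can change over a window of $\plength$ rounds, and then sum these estimates over $u \in S$. The key structural fact is that, ignoring restarts, each node's beep probability is a non-decreasing step function of the number of rounds it has been awake: it is $0$ throughout the inactive period, jumps to $2^1/(8N) = 1/(4N)$ when competing begins, and thereafter \emph{doubles} each time the node advances one phase. Since each phase lasts exactly $\plength$ rounds and the final competing probability $1/8$ matches the MIS beep probability, over any window of length at most $\plength$ a node advances by at most one phase, so its beep probability changes by at most a factor of two.

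Concretely, I would fix $t$ and an arbitrary $t' \in [t-\plength, t]$ and classify each $u \in S$ by its state at round $t$. First I would dispose of nodes that are inactive at round $t$: they satisfy $b_u(t) = 0$ and contribute nothing to $E_S(t)$, so they may be dropped. This includes every node that restarts at some round of $(t', t]$: since the inactive period $c\log^2 N$ exceeds $\plength$, such a node is still listening at round $t$ and hence has $b_u(t) = 0$. For the remaining nodes (competing or in the MIS at round $t$) I would split into two groups. Let $S_1$ be those already competing or in the MIS at round $t'$; for these the doubling property gives $b_u(t') \ge \half b_u(t)$, since there is no restart in $(t', t]$ and the node advances at most one phase. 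Let $S_2$ be those that were inactive at round $t'$; because the inactive period is longer than the window, such a node must have entered the competing state during $(t', t]$ and is therefore still in phase $1$ at round $t$, whence $b_u(t) = 1/(4N)$.

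Summing, I would bound $E_S(t') \ge \sum_{u \in S_1} b_u(t') \ge \half \sum_{u \in S_1} b_u(t)$, and control the discarded mass using $|S_2| \le |S| \le n$ together with $N \ge n$:
\[
\sum_{u \in S_2} b_u(t) \le \frac{n}{4N} \le \frac{1}{4}.
\]
Since the inactive nodes contribute $0$, we have $\sum_{u \in S_1} b_u(t) = E_S(t) - \sum_{u \in S_2} b_u(t) \ge \lambda - \tfrac14$, and therefore $E_S(t') \ge \half(\lambda - \tfrac14) = \half\lambda - \tfrac18$, as claimed. The only place the precise constants matter is in matching the $1/4$ loss from phase-$1$ nodes against the $-1/8$ slack in the statement; this is exactly what $N \ge n$ buys.

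I expect the main obstacle to be the careful bookkeeping around restarts and the inactive-to-competing boundary, rather than any hard inequality. Specifically, one must verify that (i) a single phase lasting exactly $\plength$ rounds forces at most a factor-two change, so the doubling bound does not degrade across phase boundaries or across the competing-to-MIS transition, and (ii) the long inactive period guarantees that any node touched by a restart inside the window is silent at round $t$, so such nodes never appear on the left-hand side of the inequality. Once these two observations are pinned down, the summation is routine.
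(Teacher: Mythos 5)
Your proof is correct and follows essentially the same route as the paper's: isolate the nodes that could have been inactive at round $t'$ (the paper takes the phase-$1$-at-$t$ nodes $P$, you take the inactive-at-$t'$ nodes $S_2$, a subset of the same set), bound their total mass at round $t$ by $n/(4N)\le\frac14$, apply the factor-two doubling bound to the remaining nodes, and drop the discarded nonnegative contributions, yielding $\frac12(\lambda-\frac14)=\frac12\lambda-\frac18$ exactly as in the paper. Your explicit treatment of restarts (a node restarting inside the window is still inactive at $t$ since $c\log^2 N > c\log N$) is a point the paper glosses over, but it is a refinement of the same argument, not a different one.
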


\begin{proof}
  Let $P \subseteq S$ be the subset of nodes that are at phase
  $1$ at round $t$, and let $Q=S\setminus P$ be the
  remaining nodes. Using this partition of nodes we split the
  probability mass $E_S(t)$ as:

  \begin{equation}
  E_S(t)=\underbrace{\sum_{u \in P} b_u(t)}_{E_P(t)}+
  \underbrace{\sum_{u \in Q} b_u(t)}_{E_Q(t)}
  \end{equation}

  For the rest of the proof, let $t'$ be any round in the range $[t-\plength,t]$.
  Since the nodes in $P$ are in phase $1$ at round $t$, therefore at
  round $t'$ the nodes in $P$ are either in
  the inactive state or at phase $1$. This implies that $b_u(t')
  \le 1/4 N$ for $u \in P$, and since there are at most $|P| \le |S| \le
  N$ nodes we have $E_P(t') \le N/4 N = \frac{1}{4}$.

  Similarly the nodes in $Q$ are in phase $i > 1$ at round $t$, and
  therefore at round $t'$ the nodes in $Q$ are in
  phase $i-1\ge 1$. This implies that $b_u(t') \ge \half b_u(t)$ for $u
  \in Q$, and hence $E_Q(t') \ge \half E_Q(t) = \half(E_S(t)-E_P(t)) \ge
  \half\lambda-\frac{1}{8}$.

  Finally since $E_S(t') \ge E_Q(t')$ we have $E_S(t') \ge
  \half\lambda-\frac{1}{8}$.
\end{proof}

Using the previous lemma, we show that with high probability nodes which
are competing have neighborhoods with a ``low'' beep potential.
Intuitively this is true because if a node had neighborhoods with a ``high''
beep potential, by the previous result we know it also had a high beep
potential during the previous $c\log N$ rounds, and there are good
changes it would have been kicked out of the competition in a previous
round.

\begin{lem}
  \label{lem:lowEu}
  With high probability, if node $v$ is competing at round $t$ then $E_v(t) <
  \half$.
\end{lem}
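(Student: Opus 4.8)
The plan is to establish the statement probabilistically via its contrapositive: I will show that for a fixed node $v$ and round $t$ the ``bad'' event that $v$ is competing at round $t$ while $E_v(t) \ge \frac{1}{2}$ has probability exponentially small in $\log N$, and then union bound over all nodes and over the polynomially many rounds during which some node is not yet stable. The driving intuition, as noted in the text, is that a high neighborhood beep potential persists backward in time (by Lemma~\ref{lem:slowchange}) and therefore would almost certainly have knocked $v$ out of the competition well before round $t$.

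Concretely, suppose the bad event occurs. Applying Lemma~\ref{lem:slowchange} with $S = N(v)$ and $\lambda = \frac{1}{2}$ gives the deterministic consequence $E_v(t') \ge \frac{1}{2}\cdot\frac{1}{2} - \frac{1}{8} = \frac{1}{8}$ for every $t' \in [t - c\log N, t]$, so the whole window enjoys a constant lower bound on the neighborhood potential. The second ingredient is that a node competing at round $t$ cannot have restarted anywhere in that window: a restart triggers a fresh inactive period of $c\log^2 N > c\log N$ rounds, so any restart in $[t-c\log N, t]$ would leave $v$ still inactive at round $t$. Hence throughout the window $v$ has been continuously present and, in every round, either inactive (listening with probability $1$) or competing (listening with probability $\ge \frac{7}{8}$, since its beep probability never exceeds $\frac{1}{8}$), and in either state hearing a beep would have caused a restart.

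Next I bound the per-round restart probability. Conditioned on the history up to round $t'-1$, all beep probabilities $b_u(t')$ are fixed and the coin flips at round $t'$ are mutually independent, so the event ``$v$ listens'' (probability $\ge \frac{7}{8}$) is independent of ``some neighbor beeps'', the latter having probability $1 - \prod_{u\in N(v)}(1-b_u(t')) \ge 1 - \exp(-E_v(t'))$. On the event $G_{t'} := \{E_v(t') \ge \frac{1}{8}\}$, which is measurable with respect to the history up to round $t'-1$, this is at least $1 - e^{-1/8}$, so $v$ restarts at round $t'$ with conditional probability at least $\rho := \frac{7}{8}\bigl(1 - e^{-1/8}\bigr)$, a positive constant. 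The bad event is contained in the event that $v$ does not restart anywhere in the window while $G_{t'}$ holds for all $t'$ in it; peeling off the rounds one at a time with the tower property, using that $G_{t'}$ depends only on the past and that the conditional survival probability is at most $1-\rho$ on $G_{t'}$, bounds its probability by $(1-\rho)^{c\log N}$. Choosing $c$ large enough makes this at most $N^{-c'}$ for any desired constant $c'$.

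The step I expect to require the most care is exactly this last conditioning argument: both $E_v(t')$ and the event ``$v$ competing at $t$'' are random and entangled with the very survival of $v$, so one may not naively multiply per-round probabilities as if they were independent. The tower-property peeling, which exploits that $G_{t'}$ is a function of the past while the fresh coin flips at round $t'$ are independent of that past, is what decouples them cleanly. A minor point to check along the way is that the window $[t-c\log N, t]$ genuinely lies within $v$'s continuous presence, which holds because the $c\log^2 N$-round inactive period dominates the window length. Finally, a union bound over the $n \le N$ nodes and the polynomially many relevant rounds upgrades this single-pair estimate to the desired high-probability statement.
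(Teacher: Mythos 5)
Your proposal follows essentially the same route as the paper's proof: apply Lemma~\ref{lem:slowchange} with $\lambda = 1/2$ to push the bound $E_v(t') \ge 1/8$ back through the window $[t - c\log N, t]$, observe that in each such round the node $v$ (which must have been listening with constant probability and restart-free throughout the window) hears a neighboring beep with constant probability and would thus restart, and conclude that surviving the whole window has probability $N^{-\Omega(c)}$. The only difference is presentational: you justify the product over rounds rigorously via the tower property (conditioning on the history, under which the beep probabilities at round $t'$ are fixed and the coin flips are fresh), whereas the paper simply multiplies the per-round bounds as if the events were independent --- your argument is a careful execution of exactly that step, not a different approach.
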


\begin{proof}
  Fix a node $v$ and a time $t$, we will show that if $E_v(t)\ge \half$
  then with high probability node $v$ is not competing at time $t$.
  
  Let $L_v(\tau)$ be the event that node $v$ listens at round $\tau$ and
  there is a neighbor $u \in N(v)$ who beeps at round $\tau$. First we
  estimate the probability of the event $L_v(\tau)$.

  \begin{align*}
    \PR{L_v(\tau)} &= (1-b_v(\tau))\paren{1-\prod_{u \in
    N(v)}(1-b_u(\tau))} \ge
    (1-b_v(\tau))\paren{1-\exp\paren{-\sum_{u\in N(v)} b_u(\tau)}} \\
    &= (1-b_v(\tau))(1-\exp(-E_v(\tau)))
  \end{align*}

  From \cref{lem:slowchange} we have that if $E_v(t)\ge \half$ then
  $E_v(\tau) \ge \frac{1}{8}$ for $\tau \in [t-\plength, t]$, together
  with the fact that $b_v(\tau) \le \half$ this implies that $L_v(\tau)
  \ge \half(1-e^{-1/8}) \approx 0.058$ for $\tau \in [t-\plength,t]$.

  Let $C_v(t)$ be the event that node $v$ is competing at round $t$.
  Observe that if $L_v(\tau)$ occurs for $\tau \in [t-\plength,t]$
  then node $v$ stops competing for at least $\plength$ rounds and hence
  $C_v(t)$ cannot occur. Therefore, the probability that node $v$ does
  not beep at round $t$ is at least:

  \begin{align*}
    \PR{\neg C_v(t)} &\ge \PR{\exists \tau \in
    [t-\plength,t] \st L_v(\tau)} \ge 1-\prod_{\tau=t-\plength}^t (1-\PR{L_v(\tau)})\\
    &\ge 1-\exp\paren{-\sum_{\tau=t-\plength}^t L_v(\tau)}
  \end{align*}
  
  Finally since for $\tau \in [t-\plength, t]$ we have $L_v(\tau) \ge
  0.058$, then for a sufficiently large $c$ (i.e. $c \ge 18$) with high
  probability node $v$ is not competing at round $t$.
\end{proof}

Next, we show that if a node hears a beep or produces a beep at a round
when where its neighborhood (and its neighbors neighborhood) has a
``low'' beep potential, then with constant probability either it joins
the MIS, or one of its neighbors joins the MIS.

\begin{lem}
  Assume that $E_u(t) \le \half$ for every $u \in N(v)\cup\set{v}$.

  If node $v$ beeps or hears a beep at round $t$ then with probability at
  least $\frac{1}{e}$ either $v$ beeped alone, or one if its neighbors
  beeped alone.
  \label{lem:goodbeep}
\end{lem}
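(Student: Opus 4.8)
The plan is to condition on the event that triggers the lemma and show that, inside it, an ``isolated'' beep occurs with constant probability. Write $B_v = N(v)\cup\set{v}$, and let $\mathcal{B}$ be the conditioning event that at least one node in $B_v$ beeps at round $t$; note that ``$v$ beeps or hears a beep'' is exactly $\mathcal{B}$, since $v$ hears a beep precisely when it is silent and some neighbor beeps. Let $\mathcal{G}$ be the target event that some node $w\in B_v$ beeps while every node in $N(w)$ stays silent (i.e.\ $w$ \emph{beeps alone}); this is exactly what it means for $v$ or one of its neighbors to be the unique beeper in its own neighborhood. Since $\mathcal{G}\subseteq\mathcal{B}$, it suffices to lower bound $\PR{\mathcal{G}}/\PR{\mathcal{B}}=\PR{\mathcal{G}\mid\mathcal{B}}$.

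The key observation is that at a fixed round the beep decisions of distinct nodes are independent (each node flips its own coin with probability $b_u(t)$), so I may expose them in any order. Fix an arbitrary ordering $w_1,\dots,w_m$ of $B_v$ and, conditioned on $\mathcal{B}$, let $W$ be the first beeping node in this order; the events $\set{W=w_i}$ partition $\mathcal{B}$. Conditioning on $\set{W=w_i}$ reveals only that $w_i$ beeps and that $w_1,\dots,w_{i-1}$ are silent. Any neighbor of $w_i$ that precedes it is thus already known to be silent, while the beep statuses of all remaining neighbors of $w_i$ (its successors in the order and every neighbor outside $B_v$) are untouched by this conditioning and hence still independent. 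Using the Weierstrass inequality $\prod_u(1-b_u(t))\ge 1-\sum_u b_u(t)$ and the hypothesis $\sum_{u\in N(w_i)}b_u(t)=E_{w_i}(t)\le\half$ (valid since $w_i\in B_v$), this gives
\[
  \PR{w_i \text{ beeps alone}\mid W=w_i}
  = \prod_{u\in N(w_i)\setminus\set{w_1,\dots,w_{i-1}}}(1-b_u(t))
  \;\ge\; \prod_{u\in N(w_i)}(1-b_u(t))
  \;\ge\; 1-E_{w_i}(t) \;\ge\; \half.
\]
Summing over $i$, and using that $\set{w_i \text{ beeps alone}}\subseteq\mathcal{G}$ and that $\set{W=w_i}$ partitions $\mathcal{B}$, I conclude $\PR{\mathcal{G}\mid\mathcal{B}}\ge\PR{W\text{ beeps alone}\mid\mathcal{B}}\ge\half\ge\frac1e$, which proves the lemma (in fact with the stronger constant $\half$).

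The step I expect to be the real obstacle is precisely the one this conditioning sidesteps. The isolated-beep events for the different candidate beepers in $B_v$ overlap and are entangled, so a naive union bound yields only an upper bound, and a Bonferroni or second-moment estimate loses too much when the beep mass in $B_v$ is concentrated (a single neighbor beeping with probability $\half$ already drives $\prod_u(1-b_u(t))$ down to $\half$, and pairwise-intersection terms are of the same order as the first-moment terms). Exposing a single canonical witness $W$ and then testing only its still-independent, not-yet-revealed neighbors converts the troublesome union into a clean partition of $\mathcal{B}$, and lets the hypothesis $E_w(t)\le\half$ be applied to one vertex at a time, avoiding inclusion--exclusion entirely.
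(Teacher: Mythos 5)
Your proof is correct and takes essentially the same route as the paper: the paper also partitions the conditioning event according to the first beeper in a fixed ordering of $N(v)\cup\set{v}$ (its events $\xi_i = A_i \cap \bigcap_{j<i}\neg A_j$ are exactly your events $\set{W=w_i}$), and relies on the same observation that conditioning on the earlier nodes being silent can only increase the chance that the first beeper beeps alone. The only difference is the final elementary inequality: the paper bounds $\prod_{u\in N(w)}(1-b_u(t)) \ge e^{-2E_w(t)} \ge \frac{1}{e}$, whereas your Weierstrass bound $\prod_{u\in N(w)}(1-b_u(t)) \ge 1-E_w(t) \ge \frac{1}{2}$ yields a slightly better constant.
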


\begin{proof}
  We consider three events.
  \begin{align*}
    A_u&: \text{ Node $u$ beeps at round $t$.} \\
    B_u&: \text{ Node $u$ beeps alone at round $t$.} \\
    S &: \bigcup_{w \in N(v)\cup\set{v}} B_w
  \end{align*}
  Our aim is to show that the event $S$ happens with constant
  probability, as a first step we show that $\PR{B_u|A_u}$ is constant.

  \begin{align*}
  \PR{B_u|A_u} & =\PR{\neg{\bigcup_{w \in N(u)} A_w}} =
  \PR{\bigcap_{w \in N(u)} \neg{A_w}} = \prod_{w \in N(u)}(1-b_w(t))
  \\
  & \ge \exp\paren{-2\sum_{w \in N(u)} b_w(t)} = e^{-2 E_u(t)}
  \end{align*}

  Moreover, since by assumption $E_u(t)\le \half$ then $\PR{B_u|A_u}\ge
  \frac{1}{e}$.

  For simplicity we rename the set $N(v)\cup \set{v}$ to the set
  $\set{1,\ldots,k}$ where $k=|N(v)|+1$.
  We define the following finite partition of the probability space:

  \begin{align*}
    \xi_1 &= A_1\\
    \xi_2 &= A_2 \cap \neg{A_1} \\
    \xi_3 &= A_3 \cap \neg{A_2} \cap \neg{A_1}\\
    \vdots\\
    \xi_k &= A_k \cap \bigcap_{i=1}^{k-1} \neg{A_i}
  \end{align*}

  Recall that
  by assumption our probability space is conditioned on
  the event that ``node $v$ beeps or hears a beep at round $t$'', or in
  other words $\exists i \in [k]$ such that $A_i$ has occurred. Moreover,
  observe that $\bigcup_{i=1}^k \xi_i = \bigcup_{i=1}^k A_i$, and thus
  $\PR{\bigcup_{i=1}^k \xi_i}=1$.

  Since the events $\xi_1,\ldots,\xi_k$ are pairwise disjoint, by the
  law of total probability we have

  \begin{align*}
    \PR{S} = \sum_{i=1}^k \PR{S|\xi_i}\PR{\xi_i}.
  \end{align*}

  Finally since $\PR{S|\xi_i}=\PR{B_i|\xi_i} \ge \PR{B_i|A_i}
  \ge \frac{1}{e}$ we have $\PR{S} \ge \frac{1}{e} \sum_{i=1}^k
  \PR{\xi_i} = \frac{1}{e}$.
\end{proof}

Now we have the key ingredients necessary to prove that our algorithm
terminates.

\begin{lemma}
  With high probability, after $\bigO(\log^2 N\log n)$ consecutive
  rounds without a neighbor crashing, a node is either in the MIS or has
  a neighbor in the MIS.
\end{lemma}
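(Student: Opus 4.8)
The plan is to fix a node $v$ together with a window $W$ of $L=\Theta(\log^2 N\log n)$ consecutive rounds during which no node of $N(v)\cup\{v\}$ crashes, and to prove the contrapositive: if $v$ were to remain \emph{uncovered} (neither in the MIS nor adjacent to an MIS node) throughout $W$, then $W$ would expose $\Omega(\log n)$ successive chances to cover it, each succeeding with constant probability. The first step is to observe that while $v$ is uncovered it must beep or hear a beep at least once every $O(\log^2 N)$ rounds. Indeed, the inactive phase together with the $\log N$ competing phases lasts only $c\log^2 N + \log N\cdot c\log N = O(\log^2 N)$ rounds, so if $v$ ever ran this whole sequence without beeping and without hearing a beep it would complete the competition and join the MIS, contradicting uncoveredness. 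Hence $W$ contains $\Omega(L/\log^2 N)=\Omega(\log n)$ such \emph{beep events}.

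Next I would convert each beep event, occurring at some round $t$, into a constant-probability opportunity to cover $v$. Since $v$ is uncovered at $t$, no node of $N(v)$ is in the MIS state, so every node of $N(v)\cup\{v\}$ that beeps at round $t$ is competing; by Lemma~\ref{lem:lowEu}, together with a union bound over the polynomially many rounds and the nodes of $N(v)\cup\{v\}$, I may assume $E_u(t)<\tfrac12$ for all of them. Lemma~\ref{lem:goodbeep} then applies and yields that, conditioned on the beep event at round $t$, with probability at least $1/e$ some node $w\in N(v)\cup\{v\}$ \emph{beeps alone}.

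The key step is to show that one solo beep already forces $w$ into the MIS. When $w$ beeps alone, every neighbor of $w$ is listening at round $t$, hears the beep, and restarts into the inactive state, where it stays silent for the next $c\log^2 N$ rounds (a later beep only prolongs the silence). As $v$ is uncovered, $w$ beeped from within its competition, so its remaining competition lasts fewer than $c\log^2 N$ rounds; throughout it $w$ therefore hears only silence and completes all $\log N$ phases, joining the MIS. This uses the deliberate design that the inactive period and the total competition have the same length $c\log^2 N$. Since $v\in N(w)\cup\{w\}$, node $v$ is now covered, and by the earlier claim (two neighbors almost surely do not remain in the MIS together) and the stability-preservation observation, $w$ stays a stable MIS node, so $v$ stays covered. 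Conditioning sequentially on reaching each successive beep event still uncovered, the probability that none of the $\Omega(\log n)$ beep events produces a solo beep is at most $(1-1/e)^{\Omega(\log n)}=n^{-\Omega(1)}$; a union bound with the high-probability events from Lemma~\ref{lem:lowEu} and the earlier claim finishes the proof. I expect the timing argument of this last paragraph—checking that the suppressed neighbors remain silent for exactly long enough that $w$ finishes its competition uninterrupted—to be the main obstacle, since it is precisely where the specific round counts of the algorithm (inactive length equal to competition length) are indispensable.
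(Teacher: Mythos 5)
Your proof follows essentially the same route as the paper's: beep events at least every $O(\log^2 N)$ rounds while uncovered, Lemma~\ref{lem:lowEu} plus a union bound to control the beep potentials at event times, Lemma~\ref{lem:goodbeep} to extract a solo beep with probability at least $1/e$ per event, and $\Omega(\log n)$ sequentially conditioned events to conclude. Your explicit timing argument that a solo beep forces the beeper through its remaining competition into the MIS (using that the inactive period $c\log^2 N$ is as long as the whole competition) is a correct filling-in of a step the paper's proof leaves implicit.
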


\begin{proof}
  We say a node has an event at round $t$, if it beeps or hears a beep
  at round $t$. First we claim  that with high probability a node has an
  event every $\bigO(\log^2 N)$ rounds.
  Consider a node who does not hear a beep within $\bigO(\log^2 N)$ rounds
  (if it does hear a beep, the claim clearly holds). Then after
  $\bigO(\log^2 N)$ rounds it will reach line 7 and beep (with
  probability 1) and the claim follows.

  From \cref{lem:lowEu} we know that when a node decides to beep, with
  high probability the beep potential of its neighborhood is less than
  $\half$. We can use a union bound to say that when a node hears a
  beep, with high probability the beep was produced by a node with a
  beep potential less than $\half$.
  Therefore, we can apply \cref{lem:goodbeep} to say that with constant
  probability every time a node has an event, either the node joins the
  MIS (if it was not in the MIS already) or it becomes covered by an MIS
  node.

  Therefore, with high probability after $\bigO(\log n)$ events is
  either part of the MIS or it becomes covered by an MIS node. Since
  with high probability there is an event every $\bigO(\log^2 N)$
  rounds, this implies that with high probability a node is either
  inside the MIS or has a neighbor in the MIS after $\bigO(\log^2 N \log
  n)$ rounds.
\end{proof}

This completes the proof for Theorem \ref{thm:simpleMIS} and also
implies fast convergence.

\end{document}